\newcommand{\noun}[1]{\textsc{#1}}
\numberwithin{equation}{section}
\numberwithin{figure}{section}
  \theoremstyle{plain}
  \newtheorem*{thm*}{Theorem}
\theoremstyle{plain}
\newtheorem{thm}{Theorem}
  \theoremstyle{plain}
  \newtheorem{lem}[thm]{Lemma}
  \theoremstyle{plain}
  \newtheorem{cor}[thm]{Corollary}
  \theoremstyle{remark}
  \newtheorem{rem}[thm]{Remark}
\begin{document}

\title{On Villani's Conjecture Concerning Entropy Production for the Kac Master Equation}

\date{}
\author{Amit Einav%
\thanks{The work presented in this paper was supported by U.S. National Science
Foundation grant DMS-0901304%
}\\
 Department of Mathematics\\
 Georgia Institute of Technology}
\maketitle
\begin{abstract}
In this paper we take an idea presented in recent paper by Carlen,
Carvalho, Le Roux, Loss, and Villani (\cite{key-7}) and push it one
step forward to find an exact estimation on the entropy production.
The new estimation essentially proves that Villani's conjecture is
correct, or more precisely that a much worse bound to the entropy
production is impossible in the general case. %
\footnote{Keywords and phrases: Entropy Production, Villani's Conjecture%
}
\end{abstract}
\tableofcontents{}

\section{Introduction\label{sec:Introduction}}

In his 1956 paper on the Foundations of Kinetic Theory (\cite{key-9}),
Mark Kac proposed a probabilistic model describing a system of $N$
one dimensional, randomly colliding particles. The description is
given by Kac's Master Equation \begin{equation}
\frac{\partial\psi}{\partial t}\left(v_{1},\dots,v_{N},t\right)=-N(I-Q)\psi\left(v_{1},\dots,v_{N},t\right)\label{master}\end{equation}
 where \[
Q\phi\left(v_{1},\dots,v_{N}\right)=\frac{1}{2\pi}\cdot\frac{1}{\left(\begin{array}{c}
N\\
2\end{array}\right)}\sum_{i<j}\int_{0}^{2\pi}\phi\left(R_{i,j}(\vartheta)\left(v_{1},\dots,v_{N}\right)\right)d\vartheta\]
 with \[
R_{i,j}(\vartheta)\left(v_{1},\dots,v_{N}\right)=\left(v_{1},\dots v_{i}(\vartheta),\dots,v_{j}(\vartheta),\dots,v_{N}\right)\]
 \[
v_{i}(\vartheta)=v_{i}\cos\vartheta+v_{j}\sin\vartheta,\,\, v_{j}(\vartheta)=-v_{i}\sin\vartheta+v_{j}\cos\vartheta\ .\]
 The function $\psi(v_{1},\dots,v_{N},t)$ is a probability distribution
on the energy sphere and it is formally given by \[
\psi(\cdot,t)=e^{-N(I-Q)t}\psi_{0}\]
 for some initial condition $\psi_{0}$. In the same paper, Kac introduced
the notion of chaotic sequences (although he did not call it that
way) and showed that this notion is preserved under the time evolution.
This property is now called Propagation of Chaos. Kac went further
and showed in fact that single particle marginal of the evolved density
is a solution of the model Boltzmann equation \[
\frac{\partial f}{\partial t}(v,t)=\frac{1}{2\pi}\int_{\mathbb{R}}d\omega\int_{0}^{2\pi}d\vartheta\left(f\left(v\cos\vartheta+\omega\sin\vartheta,t\right)f\left(-v\sin\vartheta+\omega\cos\vartheta,t\right)-f(v,t)f(\omega,t)\right)\]
 and thus giving a cogent derivation of the spatially homogeneous
Boltzmann equation. For a detailed review the reader may consult \cite{key-7}.

The equation (\ref{master}), or rather the operator, is a bounded
self-adjoint operator in the space $L^{2}\left(\mathbb{S}^{N-1}(\sqrt{N}),d\sigma^{N}\right)$
where $d\sigma^{N}$ is the normalized uniform measure on the sphere.
It is fairly easy to see that the time evolution defined by (\ref{master})
is ergodic, i.e., the solution will approach the function $\psi=1$
as $t\to\infty$. By the spectral theorem, the rate of approach to
the constant function in the sense of $L^{2}$ distance is governed
by the gap \[
\Delta_{N}=\inf\left\{ \left\langle \varphi,N(I-Q)\varphi\right\rangle \,\,:\,\,\left\langle \varphi,1\right\rangle =0,\,\,\left\langle \varphi,\varphi\right\rangle =1\right\} \]
 where the infimum is taken over all $\varphi\in L^{2}\left(\mathbb{S}^{N-1}(\sqrt{N}),d\sigma^{N}\right)$.
Kac conjectured that \[
\liminf_{N\rightarrow\infty}\Delta_{N}>0\ .\]
 The conjecture was proved to be true by Janvresse in (\cite{key-8})
and the exact value of $\Delta_{N}$ was computed by Carlen, Carvalho,
and Loss in (\cite{key-6}).

The $L^{2}$ distance is rather unsatisfactory. For any reasonable
density $\psi$, in particular a chaotic one, it is easy to see that
\[
\left\Vert \psi(v_{1},\dots,v_{N},0)\right\Vert _{L^{2}\left(\mathbb{S}^{N-1}(\sqrt{N}),d\sigma^{N}\right)}\geq C^{N}\]
 where $C>1$ and hence it would take a time of order $N$ to see
a substantial decay of the $L^{2}$. Clearly, this is not what one
considers {}``approach to equilibrium''. A more natural quantity
to use is the entropy \[
H_{N}(\psi)=\int_{\mathbb{S}^{N-1}(\sqrt{N})}\psi\log\psi\]

The crucial difference between the $L^{2}$ distance and the entropy
lies in the extensivity of the entropy, namely that if $\psi_{N}\left(v_{1},\dots,v_{N},t\right)$
satisfies $\psi_{N}\left(v_{1},\dots,v_{N},t\right)\approx\Pi_{i=1}^{N}f(v_{i},t)$
in a weak sense, i.e., chaotic (referred by Kac as 'The Boltzmann
Property') then \[
H_{N}(\psi_{N})\approx N\int_{\mathbb{R}}f(v,t)\log\left(\frac{f(v,t)}{\gamma(v)}\right)dv=NH(f(v,t)\vert\gamma(v))\]
 where $\gamma(v)$ is the normalized Gaussian.

Differentiating the entropy of a solution to the Kac Model gives the
time evolution equation:\[
\frac{\partial H_{N}(\psi_{N})}{\partial t}=\left\langle \log\psi_{N},N(I-Q)\psi_{N}\right\rangle \]

This, along with a known inequality by Csiszar, Kullback, Leibler
and Pinsker and the enxtensivity property allows us to conclude that
\[
\left\Vert \psi_{N}(v_{1},\dots,v_{N},t)d\sigma^{N}-d\sigma^{N}\right\Vert _{\mbox{Total Variation}}^{2}\leq2Ne^{-\Gamma_{N}t}H(f(v,0)\vert\gamma(v))\]
 for \[
\Gamma_{N}=\inf\frac{\left\langle \log\left(\psi_{N}\right),N(I-Q)\psi_{N}\right\rangle }{H_{N}(\psi_{N})}\]
 where the infimum is taken over all probability densities $\psi_{N}$
on $\mathbb{S}^{N-1}(\sqrt{N})$ which are symmetric in all their
components. $\Gamma_{N}$ is called the \emph{entropy production}.

The hope that there exists $C>0$ such that $\Gamma_{N}\geq C$ was
refuted in $2010$ in an paper by Carlen, Carvalho, Le Roux, Loss,
and Villani (\cite{key-7}) where the authors managed to find a sequence
of probability densities $\left\{ \phi_{N}\right\} _{N\in\mathbb{N}}$
with \begin{equation}
\limsup_{N\rightarrow\infty}\frac{\left\langle \log\left(\phi_{N}\right),N(I-Q)\phi_{N}\right\rangle }{H_{N}(\phi_{N})}=0\label{eq:Loss result}\end{equation}
 While this means that the time of convergence to equilibrium is not
of logarithm type, an exact estimation on the entropy production might
still give a better convergence rate than that of the original Kac
model.

The first step towards this goal was done in $2003$ by Villani in
(\cite{key-10}) who proved that\[
\Gamma_{N}\geq\frac{2}{N-1}\]
 Villani conjectured that\[
\Gamma_{N}=O\left(\frac{1}{N}\right)\]
 which wouldn't bode well for the approach to equilibrium in the ergodic
sense, but poses an interesting mathematical problem.

The main result of this paper is to show that Villani's conjecture
is essentially true. More precisely, we will show that
\begin{thm*}
For any $0<\beta<\frac{1}{6}$ there exists a constant $C_{\beta}$
depending only on $\beta$ such that\begin{equation}
\Gamma_{N}\leq\frac{C_{\beta}\log N}{N^{1-2\beta}}\label{eq:my result}\end{equation}

\end{thm*}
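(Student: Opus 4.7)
The plan is to prove the theorem by exhibiting, for each $\beta \in (0,1/6)$, an explicit family of symmetric probability densities $\phi_N$ on $\mathbb{S}^{N-1}(\sqrt{N})$ that attains, up to constants, the claimed bound on the ratio $\langle \log\phi_N, N(I-Q)\phi_N\rangle / H_N(\phi_N)$. The starting point is the construction of Carlen, Carvalho, Le Roux, Loss and Villani behind (\ref{eq:Loss result}): there $\phi_N$ is taken to be the restriction to the energy sphere of a tensorized mixture-of-Gaussians density, and the ratio is shown to vanish by a limiting argument. The new input here is to let the internal parameter of that construction depend on $N$ in a quantitative way and to track the resulting rates carefully.

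Concretely, I would take $F_N = (1-\alpha_N)\gamma_{\sigma_{1,N}} + \alpha_N\gamma_{\sigma_{2,N}}$ with second moment normalized to one and perturbation scale $\varepsilon_N := |\sigma_{1,N}^{2} - \sigma_{2,N}^{2}| \asymp N^{-\beta}$, and set $\phi_N \propto (F_N^{\otimes N})\big|_{\mathbb{S}^{N-1}(\sqrt{N})}$. The weights and variances should be tuned so that the linear perturbation of $F_N$ around $\gamma$ in $L^{2}(\gamma)$ is orthogonal to the Hermite polynomials of degree $\leq 2$, leaving the first nontrivial contribution in the fourth-degree mode, as in \cite{key-7}.

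The analysis then splits into two estimates. For the denominator, a Taylor expansion of $H(F_N\|\gamma)$ about $\gamma$ yields $H(F_N\|\gamma) = \Theta(\varepsilon_N^{2})$, and a local central limit theorem for the radial statistic $\sum_{i} v_i^{2}$ under the tilted law $F_N^{\otimes N}$ allows one to pass from $\mathbb{R}^{N}$ to the sphere with a controlled multiplicative correction. This yields $H_N(\phi_N) \gtrsim N\varepsilon_N^{2} = N^{1-2\beta}$. For the numerator, one uses that the operator $Q$ averages over pair rotations and that each individual Gaussian factor $\gamma_{\sigma}^{\otimes 2}$ is exactly invariant under these rotations. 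Therefore the only contribution to $(I-Q)\phi_N$ comes from the cross-terms in the expansion of $F_N^{\otimes N}$ in which the two coordinates being rotated have mismatched variance labels. Pairing these cross-terms with $\log\phi_N$ and truncating on Gaussian-tail level sets (at cost of a single $\log N$ factor) yields a numerator of size $O(\log N)$, independent of $\beta$. Dividing the two estimates produces the stated bound.

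The principal obstacle is the uniform control, under the sphere conditioning, of the remainder in the local central limit theorem for $\sum_{i}v_i^{2}$ under $F_N^{\otimes N}$. The Gaussian approximation of the density of this quadratic statistic has error terms polynomial in $\varepsilon_N = N^{-\beta}$ and $N^{-1/2}$, and a direct computation shows these become comparable to the leading $\Theta(\varepsilon_N^{2})$ term precisely when $3\beta$ approaches $1/2$, forcing the threshold $\beta < 1/6$. Extending the range of $\beta$ toward the $\log N / N$ rate conjectured by Villani would require either a much more delicate conditioning analysis or a genuinely different family of test densities; the $\log N$ factor in the final bound is the unavoidable price of the tail truncation used to tame $\log\phi_N$.
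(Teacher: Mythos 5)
Your construction is not the one used in the paper, and as set up it cannot work. You propose a mixture of two Gaussians whose variances differ by $\varepsilon_N=N^{-\beta}\to 0$, so that after normalizing the second moment $F_N=\gamma+O(\varepsilon_N)$ is a small perturbation of the equilibrium Maxwellian. For such perturbative families the entropy-production ratio cannot vanish: writing $\phi_N=1+g_N$ on the sphere, the numerator linearizes to $\left\langle g_N,N(I-Q)g_N\right\rangle$ and the denominator to $\frac{1}{2}\left\langle g_N,g_N\right\rangle$, so the ratio is bounded below by essentially twice the spectral gap $\Delta_N$, and $\liminf_N\Delta_N>0$ by Janvresse's theorem. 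Concretely, your two estimates are mutually inconsistent: the same cross-term mechanism you invoke for the numerator gives a contribution of order $N\alpha_N(1-\alpha_N)\varepsilon_N^2$, because the pair difference $M_{\sigma_1}(v_1(\vartheta))M_{\sigma_2}(v_2(\vartheta))-M_{\sigma_1}(v_1)M_{\sigma_2}(v_2)$ vanishes identically when $\sigma_1=\sigma_2$ and is $O(\varepsilon_N)$, as is $\log F_N-\log\gamma$. Hence the numerator is of the \emph{same} order $N^{1-2\beta}$ as your denominator, not $O(\log N)$, and the ratio is $\Theta(1)$. The $\log N$ in the statement does not come from a tail truncation of $\log\phi_N$.

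The paper's construction is the opposite of perturbative. One takes $f_{\delta}(v)=\delta M_{\frac{1}{2\delta}}(v)+(1-\delta)M_{\frac{1}{2(1-\delta)}}(v)$ with $\delta=\delta_N=N^{-(1-2\beta)}$: a vanishing fraction $\delta_N$ of the particles sits in a Gaussian of \emph{diverging} variance $\frac{1}{2\delta_N}$, tuned so that each component carries half the total energy; the fourth moment of $f_{\delta_N}$ blows up (the paper stresses exactly this point in the final remarks). Consequently the state stays a fixed entropic distance from equilibrium per particle and the denominator is $\sim N\frac{\log 2}{2}$, of order $N$, not $N^{1-2\beta}$. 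The numerator is small because only the cross terms, of total weight $\delta_N(1-\delta_N)$, fail to be invariant under the pair rotations, and $-\log f_{\delta}\lesssim\log(1/\delta)+\delta v^2$, giving $\left\langle \log F_N,N(I-Q)F_N\right\rangle\leq C\,N\delta_N\log(1/\delta_N)\sim C\,N^{2\beta}\log N$; dividing yields $C_\beta\log N/N^{1-2\beta}$. The threshold $\beta<\frac{1}{6}$ arises from the local central limit theorem needed to control $Z_{N-j}(f_{\delta_N},r)$ uniformly: one needs $\delta_N^{1+2\beta}N\to\infty$ and $\delta_N^{1+3\beta}N\to 0$ simultaneously, which for $\delta_N=N^{-(1-2\beta)}$ forces $(1-2\beta)(1+3\beta)>1$, i.e. $\beta<\frac{1}{6}$. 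If you want to salvage your write-up you must replace the near-degenerate mixture by this singular one; your local CLT step for $\sum_i v_i^2$ under $f_{\delta_N}^{\otimes N}$, with $\delta_N$-dependent error terms, then becomes the right technical tool and is carried out in Sections \ref{sec:Centrel-Limit-Theorem} and \ref{sec:Entropy-Production}.
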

(See Theorem \ref{thm:tnropy production result} in Section \ref{sec:Entropy-Production}).

Both (\ref{eq:Loss result}) and (\ref{eq:my result}) are proved
with the same idea: creating an $N$ particle symmetric function $F_{N}$
from a one particle function $f$\[
F_{N}\left(v_{1},\dots,v_{N}\right)=\frac{\Pi_{i=1}^{N}f(v_{i})}{Z_{N}(f,\sqrt{N})}\]
 where \[
Z_{N}(f,r)=\int_{\mathbb{S}^{N-1}(r)}\Pi_{i=1}^{N}f(v_{i})d\sigma_{r}^{N}\]
 and $d\sigma_{r}^{N}$ is the uniform probability measure on $\mathbb{S}^{N-1}(r)$.
The main difference between the two proofs lies in the fact that while
in (\cite{key-7}) $f$ remains fixed, in our paper $f$ changes with
$N$ via a parameter $\delta=\delta_{N}$.

The paper is structured as follows: Section \ref{sec:About-the-Function Z}
reviews known results about the normalization function $Z_{N}(f,r)$.
Section \ref{sec:Centrel-Limit-Theorem} is our main theoretical part
of the paper, dealing with general properties that will allow us to
give an asymptotic expression to the normalization function. Section
\ref{sec:Entropy-Production} is where we prove our main result. Picking
a function which is natural to the problem at hand and using the result
of the previous sections along with some involved computation. Section
\ref{sec:Final-Remarks} contains a few last remarks and the Appendix
has some simple but very useful computation that we use throughout
the entire paper.

We'd like to conclude the introduction by thanking Michael Loss for
his helpful remarks and discussions, making this paper possible.

\section{The Function $Z_{N}(f,r)$\label{sec:About-the-Function Z}}

The key to the computation of the entropy production lies with the
normalization function $Z_{N}(f,r)$. In this short section we'll
find a simple probabilistic interpretation to it, along with a formula
that will serve us in the following sections and the final computation.
This section is a short review of known results from (\cite{key-7}). 
\begin{lem}
\label{lem: Definition of h}Let $f$ be a density function for the
real valued random variable $V$. Then the density function of the
random variable $V^{2}$ is given by \[
h(u)=\frac{f(\sqrt{u})+f(-\sqrt{u})}{2\sqrt{u}}\]
 \end{lem}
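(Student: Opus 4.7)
The plan is to verify the formula by identifying $h$ through its action against an arbitrary (bounded, measurable) test function $\varphi$. Concretely, I would split
\[
\mathbb{E}[\varphi(V^{2})]=\int_{-\infty}^{\infty}\varphi(v^{2})f(v)\,dv=\int_{0}^{\infty}\varphi(v^{2})f(v)\,dv+\int_{-\infty}^{0}\varphi(v^{2})f(v)\,dv
\]
into its positive and negative parts, because the obstacle in this computation is precisely that the map $v\mapsto v^{2}$ is not injective on $\mathbb{R}$ and so a naive change of variables cannot be performed globally.

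Next I would change variables $u=v^{2}$ on each half-line separately. On $(0,\infty)$ the substitution $v=\sqrt{u}$ gives $dv=du/(2\sqrt{u})$ and produces $\int_{0}^{\infty}\varphi(u)f(\sqrt{u})\,du/(2\sqrt{u})$. On $(-\infty,0)$ the substitution $v=-\sqrt{u}$ gives $dv=-du/(2\sqrt{u})$, and after reversing the limits of integration one obtains $\int_{0}^{\infty}\varphi(u)f(-\sqrt{u})\,du/(2\sqrt{u})$. Adding the two yields
\[
\mathbb{E}[\varphi(V^{2})]=\int_{0}^{\infty}\varphi(u)\,\frac{f(\sqrt{u})+f(-\sqrt{u})}{2\sqrt{u}}\,du.
\]
Since this holds for every test function $\varphi$, the integrand against $\varphi(u)\,du$ must be the density of $V^{2}$, which is exactly the claimed $h(u)$.

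An equivalent and perhaps cleaner alternative, which I would mention as a sanity check, is to compute the cumulative distribution function directly: for $u>0$,
\[
\mathbb{P}(V^{2}\le u)=\mathbb{P}(-\sqrt{u}\le V\le\sqrt{u})=\int_{-\sqrt{u}}^{\sqrt{u}}f(v)\,dv,
\]
and differentiating with respect to $u$ using the chain rule on both endpoints recovers $h(u)=(f(\sqrt{u})+f(-\sqrt{u}))/(2\sqrt{u})$. There is no serious obstacle here; the only care needed is to keep the two halves of $\mathbb{R}$ separate so as not to lose the contribution from negative $v$, and to note that $u=0$ is a single point and hence irrelevant for the density.
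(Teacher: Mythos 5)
Your proposal is correct and follows essentially the same route as the paper: both arguments identify $h$ by computing the expectation of an arbitrary test function in two ways, folding the negative half-line onto the positive one, and performing the change of variables $u=v^{2}$. The CDF differentiation you mention as a sanity check is a valid alternative but not needed.
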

\begin{proof}
For any function $\varphi=\varphi(|x|)=\varphi(r)$ we find that\[
\mathbb{E}\varphi=\int_{0}^{\infty}\varphi(r)\cdot\left(f(r)+f(-r)\right)dr\]
 on the other hand\[
\mathbb{E}\varphi=\int_{0}^{\infty}\varphi\left(\sqrt{t}\right)h(t)dt=\int_{0}^{\infty}\varphi(r)\cdot2r\cdot h\left(r^{2}\right)dr\]
 Since $\varphi$ was arbitrary we find that\[
2r\cdot h\left(r^{2}\right)=f(r)+f(-r)\]
 and the result follows.\end{proof}
\begin{lem}
\label{lem:Definition of s}Let $V_{1},\dots,V_{N}$ be independent
real valued random variables with identical density function $f(v)$.
Then the density function for $S_{N}=\sum_{i=1}^{N}V_{i}^{2}$ is
given by $s_{N}(u)=\frac{|\mathbb{S}^{N-1}|}{2}u^{\frac{N}{2}-1}Z_{N}(f,\sqrt{u})$. \end{lem}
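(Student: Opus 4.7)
The plan is to identify $s_N$ via the arbitrariness-of-test-function argument already used in Lemma \ref{lem: Definition of h}. For any continuous test function $\varphi$ I would compute $\mathbb{E}\varphi(S_N)$ in two different ways. The direct way uses the joint density: since $V_1,\dots,V_N$ are i.i.d.\ with density $f$,
\[
\mathbb{E}\varphi(S_N) = \int_{\mathbb{R}^N} \varphi\!\left(\sum_{i=1}^N v_i^2\right) \prod_{i=1}^N f(v_i)\, dv_1\cdots dv_N.
\]

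The key step is then to pass to polar coordinates $v = r\omega$ with $r \geq 0$ and $\omega \in \mathbb{S}^{N-1}$. Since $d\sigma_r^N$ is the \emph{normalized} uniform measure on $\mathbb{S}^{N-1}(r)$, Lebesgue measure on $\mathbb{R}^N$ decomposes as $|\mathbb{S}^{N-1}|\, r^{N-1}\, d\sigma_r^N\, dr$. On $\mathbb{S}^{N-1}(r)$ we have $\sum v_i^2 = r^2$, so $\varphi(r^2)$ pulls out of the angular integration, and the remaining inner integral is, by definition, $Z_N(f,r)$. This gives
\[
\mathbb{E}\varphi(S_N) = |\mathbb{S}^{N-1}|\int_0^\infty \varphi(r^2)\, r^{N-1}\, Z_N(f,r)\, dr.
\]

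To finish I would substitute $u = r^2$, $du = 2r\, dr$, obtaining
\[
\mathbb{E}\varphi(S_N) = \int_0^\infty \varphi(u) \cdot \frac{|\mathbb{S}^{N-1}|}{2}\, u^{N/2-1}\, Z_N(f,\sqrt{u})\, du,
\]
and then compare with the definitional identity $\mathbb{E}\varphi(S_N) = \int_0^\infty \varphi(u)\, s_N(u)\, du$; the arbitrariness of $\varphi$ forces the integrands to agree, which is exactly the claim. The only mildly delicate point is the sphere bookkeeping at the polar-coordinate stage: because $d\sigma_r^N$ is here a probability measure rather than the standard surface measure, one has to insert the factor $|\mathbb{S}^{N-1}|\, r^{N-1}$ by hand when decomposing $dv$, and dropping it would yield an answer off by the area of the unit sphere. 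Everything else is a routine change of variables.
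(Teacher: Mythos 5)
Your argument is correct and is essentially the paper's own proof: both compute the expectation of an arbitrary test function two ways, once via polar coordinates (using $dv = |\mathbb{S}^{N-1}|r^{N-1}\,d\sigma_r^N\,dr$ and the definition of $Z_N$ with respect to the normalized measure) and once via the density $s_N$, then substitute $u=r^2$ and compare integrands. The bookkeeping point you flag about the normalized versus surface measure is exactly the factor the paper handles in its first display.
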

\begin{proof}
Similar to Lemma \ref{lem: Definition of h} for any $\varphi=\varphi(r)$
we find that\[
\mathbb{E}\varphi=\int_{0}^{\infty}\varphi(r)\left(\int_{\mathbb{S}^{N-1}(r)}f(v_{1})\dots f(v_{N})ds_{r}^{N}\right)dr=\int_{0}^{\infty}\varphi(r)|\mathbb{S}^{N-1}|r^{N-1}Z_{N}(f,r)dr\]
 on the other hand\[
\mathbb{E}\varphi=\int_{0}^{\infty}\varphi(\sqrt{x})s_{N}(x)dx=\int_{0}^{\infty}\varphi(r)\cdot2r\cdot s_{N}\left(r^{2}\right)dr\]
 Since $\varphi$ is arbitrary\[
2r\cdot s_{N}\left(r^{2}\right)=|\mathbb{S}^{N-1}|r^{N-1}Z_{N}(f,r)\]
which implies the result.\end{proof}
\begin{cor}
\label{cor:Expression-for Z}(Expression for $Z_{N}(f,r)$) Under
the conditions of Lemma \ref{lem:Definition of s}\[
Z_{N}(f,\sqrt{r})=\frac{2h^{*N}(r)}{|\mathbb{S}^{N-1}|r^{\frac{N}{2}-1}}\]
 where $h^{^{*N}}$ is the $N$-fold convolution of $h$, defined
in Lemma \ref{lem: Definition of h}.\end{cor}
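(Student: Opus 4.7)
The plan is to combine the two preceding lemmas; essentially no new machinery is required. By Lemma \ref{lem: Definition of h}, each random variable $V_i^2$ has density $h$, and by the independence of $V_1,\dots,V_N$ the squares $V_1^2,\dots,V_N^2$ are themselves independent. Consequently, the density of the sum $S_N=\sum_{i=1}^N V_i^2$ is the $N$-fold convolution $h^{*N}$.

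Having identified $s_N=h^{*N}$, I would simply substitute this into the conclusion of Lemma \ref{lem:Definition of s}:
\[
h^{*N}(u)=\frac{|\mathbb{S}^{N-1}|}{2}\,u^{\frac{N}{2}-1}\,Z_N(f,\sqrt{u}),
\]
and solve for $Z_N(f,\sqrt{u})$, which yields the claimed identity after renaming $u$ to $r$.

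The only point worth mentioning carefully is the independence step: the claim $s_N=h^{*N}$ uses that a function of independent random variables (here, $V_i\mapsto V_i^2$) produces independent random variables, so the density of their sum is a convolution. This is standard, and there is no real obstacle in the argument; the corollary is essentially a rewriting of Lemmas \ref{lem: Definition of h} and \ref{lem:Definition of s}.
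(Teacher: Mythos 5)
Your proposal is correct and matches the paper's argument exactly: the paper also identifies $s_N$ with $h^{*N}$ via the standard convolution fact for sums of independent random variables and then solves the identity from Lemma \ref{lem:Definition of s} for $Z_N$. You have merely spelled out the "known probability fact" that the paper cites implicitly.
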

\begin{proof}
This follows immediately from Lemma \ref{lem:Definition of s}, Lemma
\ref{lem: Definition of h} and a known probability fact. 
\end{proof}

\section{Central Limit Theorem\label{sec:Centrel-Limit-Theorem}}

In order for us to be able to compute the entropy production an asymptotic
behavior for $Z_{N}(f,r)$ is needed. As seen in Section \ref{sec:About-the-Function Z}
the function $Z_{N}(f,r)$ is closely related to the $N$-fold convolution
of the density function $h(u)$ and as such we'll employ standard
techniques to estimate it. The specific function we'll construct as
a test function for the entropy production has the property that the
Fourier transform of its one particle function splits the line into
two natural domains: One where we can use analytic expansion, and
one where the decay is dominated by exponential functions. The radius
of the separating circle would depend on a parameter $\delta=$$\delta_{N}$
that we'll exploit later on to get the final conclusion.

While this is the case arising in our specific construction, we believe
that it's a natural way to view the problem. Even though we have yet
to attempt any different test functions we think that similar situation
would happen in a larger class of functions created from one particle
function. As such, a generalization of our computation was made and
is presented in this section.

The reader should keep in mind the following intuition while reading
this section: $g(\xi)$ represents the Fourier transform of the function
$h(u)$, connected to the one particle function via Lemma \ref{lem: Definition of h}.
The first lemma of the section explores the domain outside the radius
of analiticity while the second explores the domain where analytic
expansion is possible. Lastly, the parameter$\delta$ is a function
of $N$, going to zero as $N$ goes to infinity. 
\begin{lem}
\label{lem:Non-Analycity Domain}Let $g_{\delta}(\xi)=g_{\delta_{N}}(\xi)$
be such that

$(i)$ for $|\xi|>c\delta$ $|g_{\delta}(\xi)|\leq1-\alpha(\delta)$,
where $\alpha(\delta)>0$.

$(ii)$ $|g_{\delta}(\xi)|\leq1$ for all $\xi$.

Then\[
\int_{|\xi|>c\delta}\left|g_{\delta}^{N}(\xi)-\gamma_{1}^{N}(\xi)\right|d\xi\]
 \[
\leq2\int_{|\xi|>c\delta}\left|g_{\delta}(\xi)\right|^{N-1}d\xi+\frac{\left(1-\alpha(\delta)\right)^{\frac{N}{2}-1}}{\pi c\delta\Sigma_{\delta}^{2}}+\frac{1}{\pi c\delta\Sigma_{\delta}^{2}}\cdot e^{-(1+N)\pi^{2}c^{2}\delta^{2}\Sigma_{\delta}^{2}}\]
 where $\gamma_{1}(\xi)=e^{-2\pi i\zeta}\cdot e^{-2\pi^{2}\xi^{2}\Sigma_{\delta}^{2}}$.\end{lem}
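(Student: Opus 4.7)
The plan is to control $|g_\delta^N - \gamma_1^N|$ pointwise by the triangle inequality and then estimate the two resulting integrals separately using the hypotheses and standard Gaussian tail bounds. Specifically, I would start from
\[
|g_\delta^N(\xi) - \gamma_1^N(\xi)| \leq |g_\delta(\xi)|^N + |\gamma_1(\xi)|^N,
\]
reducing the task to bounding $\int_{|\xi|>c\delta} |g_\delta|^N\,d\xi$ and $\int_{|\xi|>c\delta} |\gamma_1|^N\,d\xi$ separately.

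For the first integral I would use two complementary splittings of the exponent. The splitting $|g_\delta|^N = |g_\delta|^{N-1}\cdot|g_\delta|$, combined with condition $(ii)$ to give $|g_\delta|\leq 1$, yields the contribution $\int_{|\xi|>c\delta} |g_\delta|^{N-1}\,d\xi$. The alternative splitting $|g_\delta|^N = |g_\delta|^{N/2-1}\cdot|g_\delta|^{N/2+1}$, combined with condition $(i)$ to give the pointwise bound $|g_\delta|^{N/2-1}\leq(1-\alpha(\delta))^{N/2-1}$ on the region, extracts the factor $(1-\alpha(\delta))^{N/2-1}$; the remaining integral of $|g_\delta|^{N/2+1}$ is then controlled by a Gaussian comparison with variance $\Sigma_\delta^2$, producing the $(\pi c \delta \Sigma_\delta^2)^{-1}$ prefactor via a standard Gaussian tail estimate.

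For the Gaussian integral $\int_{|\xi|>c\delta}|\gamma_1|^N\,d\xi$ I use $|\gamma_1(\xi)| = e^{-2\pi^2\xi^2\Sigma_\delta^2}$ together with the elementary tail estimate $\int_a^\infty e^{-b\xi^2}\,d\xi \leq e^{-ba^2}/(2ab)$. Splitting $|\gamma_1|^N = |\gamma_1|^{N-1}\cdot|\gamma_1|$ and bounding $|\gamma_1|^{N-1}$ pointwise on the region by its maximum value at $|\xi|=c\delta$, the residual Gaussian tail of $|\gamma_1|$ yields the exponent $-(N+1)\pi^2 c^2\delta^2\Sigma_\delta^2$ and the prefactor $(\pi c\delta\Sigma_\delta^2)^{-1}$ of the third term. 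The main technical obstacle I anticipate is bookkeeping the explicit constants: the exponents in the stated bound depend sensitively on how one partitions the power $N$ between decay-producing factors (from conditions $(i)$--$(ii)$ and the Gaussian smallness on $|\xi|>c\delta$) and integrable factors, and these partitions must be chosen carefully so that the resulting coefficients match exactly.
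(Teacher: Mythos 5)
Your opening decomposition $|g_\delta^N-\gamma_1^N|\le|g_\delta|^N+|\gamma_1|^N$ already departs from the paper, which instead uses the telescoping identity
\[
g_\delta^N-\gamma_1^N=\left(g_\delta-\gamma_1\right)\sum_{k=0}^{N-1}g_\delta^{N-k-1}\gamma_1^{k},
\]
bounds $|g_\delta-\gamma_1|\le 2$, and then treats each cross term $|g_\delta|^{N-k-1}|\gamma_1|^{k}$: the $k=0$ term gives $2\int|g_\delta|^{N-1}$, the terms $1\le k\le[N/2]$ give the factor $(1-\alpha(\delta))^{N/2-1}$ from hypothesis $(i)$ multiplied by a convergent sum of genuine Gaussian integrals of $|\gamma_1|^k$, and the terms $k>[N/2]$ give the last exponential via the Gaussian tail and sum estimates of the Appendix. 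The structure of the claimed bound --- a product of a power of $(1-\alpha(\delta))$ with a Gaussian-type prefactor $(\pi c\delta\Sigma_\delta^2)^{-1}$ --- comes precisely from these cross terms, which your triangle inequality discards.

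The concrete gap is in your treatment of $\int_{|\xi|>c\delta}|g_\delta|^{N}d\xi$ via the ``alternative splitting'': you propose to bound $\int_{|\xi|>c\delta}|g_\delta|^{N/2+1}d\xi$ ``by a Gaussian comparison with variance $\Sigma_\delta^2$,'' but hypotheses $(i)$ and $(ii)$ give only the constant pointwise bounds $|g_\delta|\le 1-\alpha(\delta)$ and $|g_\delta|\le 1$ on that region; there is no assumption of Gaussian (or even any integrable) decay of $g_\delta$ itself. Indeed, the very presence of the unresolved term $\int_{|\xi|>c\delta}|g_\delta|^{N-1}d\xi$ in the conclusion signals that no such control is available from the hypotheses, so this step cannot be carried out. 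Without it you can only produce the first and third terms of the stated bound (and even the third only up to constants that need not match unless one additionally knows $c\delta\Sigma_\delta\lesssim\sqrt{N}$), not the middle term. The fix is to abandon the crude triangle inequality and use the factorization of $a^N-b^N$, so that every summand retains an honest Gaussian factor $|\gamma_1|^k$ to integrate against the $(1-\alpha(\delta))$ powers supplied by $(i)$.
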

\begin{proof}
We have that\[
\int_{|\xi|>c\delta}\left|g_{\delta}^{N}(\xi)-\gamma_{1}^{N}(\xi)\right|d\xi=\int_{|\xi|>c\delta}\left|g_{\delta}(\xi)-\gamma_{1}(\xi)\right|\cdot\left|\sum_{k=0}^{N-1}g_{\delta}^{N-k-1}(\xi)\gamma_{1}^{k}(\xi)\right|d\xi\]
 \[
\leq2\int_{|\xi|>c\delta}\sum_{k=0}^{N-1}\left|g_{\delta}^{N-k-1}(\xi)\right|\left|\gamma_{1}^{k}(\xi)\right|d\xi\]
 \[
\leq2\int_{|\xi|>c\delta}\left|g_{\delta}(\xi)\right|^{N-1}d\xi+2\sum_{k=1}^{N-1}\left(1-\alpha(\delta)\right)^{N-k-1}\int_{|\xi|>c\delta}e^{-2k\pi^{2}\xi^{2}\Sigma_{\delta}^{2}}d\xi\]
 Using Lemma \ref{lem:Gaussian-integral-estimation} and \ref{lem:Special-Sums-Evaluation}
in the Appendix we find that\[
\sum_{k=k_{0}}^{N-1}\int_{|\xi|>c\delta}e^{-2k\pi^{2}\xi^{2}\Sigma_{\delta}^{2}}d\xi\leq\sum_{k=k_{0}}^{N-1}\frac{\sqrt{2\pi}\cdot e^{-\frac{4k\pi^{2}c^{2}\delta^{2}\Sigma_{\delta}^{2}}{2}}}{\sqrt{4k\pi^{2}\Sigma_{\delta}^{2}}}\]
 \[
\leq\frac{1}{2\pi c\delta\Sigma_{\delta}^{2}}\cdot e^{-2k_{0}\pi^{2}c^{2}\delta^{2}\Sigma_{\delta}^{2}}\]
 Hence\[
\int_{|\xi|>c\delta}\left|g_{\delta}^{N}(\xi)-\gamma_{1}^{N}(\xi)\right|d\xi\]
 \[
\leq2\int_{|\xi|>c\delta}\left|g_{\delta}(\xi)\right|^{N-1}d\xi+2\left(1-\alpha(\delta)\right)^{N-\left[\frac{N}{2}\right]-1}\sum_{k=1}^{\left[\frac{N}{2}\right]}\int_{|\xi|>c\delta}e^{-2k\pi^{2}\xi^{2}\Sigma_{\delta}^{2}}d\xi\]
 \[
+2\sum_{k=\left[\frac{N}{2}\right]+1}^{N-1}\int_{|\xi|>c\delta}e^{-2k\pi^{2}\xi^{2}\Sigma_{\delta}^{2}}d\xi\]
 \[
\leq2\int_{|\xi|>c\delta}\left|g_{\delta}(\xi)\right|^{N-1}d\xi+\frac{\left(1-\alpha(\delta)\right)^{\frac{N}{2}-1}}{\pi c\delta\Sigma_{\delta}^{2}}+\frac{1}{\pi c\delta\Sigma_{\delta}^{2}}\cdot e^{-(1+N)\pi^{2}c^{2}\delta^{2}\Sigma_{\delta}^{2}}\]
 \end{proof}
\begin{lem}
\label{lem:Analytic Domain}Let $g_{\delta}(\xi)=g_{\delta_{N}}(\xi)$
be such that

$(i)$ there exist $M_{0},M_{1},M_{2}>0$ such that $\sup_{|\xi|<c\delta}\left|g_{\delta}(\xi)-\gamma_{1}(\xi)\right|\leq\left(\frac{M_{0}}{\delta^{2}}+\frac{M_{1}}{\delta}+M_{2}\right)|\xi|^{3}$.

$(ii)$ for $c\delta^{1+\beta}<|\xi|<c\delta$ $|g_{\delta}(\xi)|\leq1-\alpha_{\beta}(\delta)$
where $\alpha_{\beta}(\delta)>0$.

$(iii)$ $|g_{\delta}(\xi)|\leq1$ for all $\xi$.

Then \[
\int_{|\xi|<c\delta}\left|g_{\delta}^{N}(\xi)-\gamma_{1}^{N}(\xi)\right|d\xi\leq\frac{c^{4}\delta^{2}\left(M_{0}+M_{1}\delta+M_{2}\delta^{2}\right)}{2}\]
 \[
+\frac{c^{3}\delta\sqrt{N}\left(M_{0}+M_{1}\delta+M_{2}\delta^{2}\right)\left(1-\alpha_{\beta}(\delta)\right)^{\frac{N}{2}-1}}{\sqrt{\pi\Sigma_{\delta}^{2}}}+\frac{c^{3}\delta^{1-\beta}\left(M_{0}+M_{1}\delta+M_{2}\delta^{2}\right)e^{-\pi^{2}(N-1)c^{2}\delta^{2+2\beta}\Sigma_{\delta}^{2}}}{2\pi c\delta\Sigma_{\delta}^{2}\cdot\sqrt{1-e^{-2\pi^{2}Nc^{2}\delta^{2}\Sigma_{\delta}^{2}}}}\]
 \[
+\frac{2c^{3}\left(M_{0}+M_{1}\delta+M_{2}\delta^{2}\right)\sqrt{N}\delta^{1+3\beta}}{\sqrt{2\pi\Sigma_{\delta}^{2}}}\]
 where $\gamma_{1}(\xi)=e^{-2\pi i\zeta}\cdot e^{-2\pi^{2}\xi^{2}\Sigma_{\delta}^{2}}$.\end{lem}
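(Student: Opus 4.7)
My plan is to follow the strategy of Lemma \ref{lem:Non-Analycity Domain} while exploiting hypothesis (i) to replace the crude bound $|g_\delta-\gamma_1|\leq 2$ with the pointwise estimate $|g_\delta-\gamma_1|\leq A|\xi|^3$ on $|\xi|<c\delta$, where I abbreviate $A:=\frac{M_0+M_1\delta+M_2\delta^2}{\delta^2}$. I will apply the telescoping identity
\[
g_\delta^N(\xi)-\gamma_1^N(\xi)=\bigl(g_\delta(\xi)-\gamma_1(\xi)\bigr)\sum_{k=0}^{N-1}g_\delta^{N-k-1}(\xi)\gamma_1^k(\xi)
\]
followed by the triangle inequality, and then separate the $k=0$ contribution from the rest. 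The $k=0$ piece is controlled purely by hypothesis (iii): $\int_{|\xi|<c\delta}A|\xi|^3|g_\delta|^{N-1}\,d\xi\leq A\cdot\frac{(c\delta)^4}{2}$, which gives exactly the first term of the bound.

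For the sum with $k\geq 1$, I will split the integration at the critical radius $|\xi|=c\delta^{1+\beta}$ dictated by hypothesis (ii). On the inner disk $|\xi|<c\delta^{1+\beta}$ I can only rely on (iii): I bound $|g_\delta|\leq 1$ and $|\xi|^3\leq(c\delta^{1+\beta})^3$, extend each integral of $|\gamma_1|^k$ to all of $\mathbb{R}$ to obtain $(2k\pi\Sigma_\delta^2)^{-1/2}$, and sum using $\sum_{k=1}^{N-1}k^{-1/2}\leq 2\sqrt{N-1}$; this produces the fourth term. On the annulus $c\delta^{1+\beta}<|\xi|<c\delta$ I will use (ii) to bound $|g_\delta|\leq 1-\alpha_\beta(\delta)$, replace $|\xi|^3$ by $c^3\delta^3$, and split the $k$-sum at $[N/2]$ exactly as in Lemma \ref{lem:Non-Analycity Domain}. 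For $1\leq k\leq[N/2]$, extracting the common factor $(1-\alpha_\beta)^{N-k-1}\leq(1-\alpha_\beta)^{N/2-1}$ and bounding the resulting sum of Gaussian integrals by $\sqrt{N/(\pi\Sigma_\delta^2)}$ yields the second term.

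The hard part will be the third term, which corresponds to $[N/2]<k\leq N-1$ on the annulus, a regime in which the factor $(1-\alpha_\beta)^{N-k-1}$ provides no useful decay. Here I must rely entirely on the Gaussian tail of $|\gamma_1|^k$: on the annulus $|\xi|\geq c\delta^{1+\beta}$, so $|\gamma_1(\xi)|^k\leq e^{-2k\pi^2c^2\delta^{2+2\beta}\Sigma_\delta^2}$. Combining the appendix estimates (Lemmas \ref{lem:Gaussian-integral-estimation} and \ref{lem:Special-Sums-Evaluation}) with a Cauchy--Schwarz step will produce the characteristic denominator $\bigl(1-e^{-2\pi^2Nc^2\delta^2\Sigma_\delta^2}\bigr)^{1/2}$ alongside the exponential weight $e^{-\pi^2(N-1)c^2\delta^{2+2\beta}\Sigma_\delta^2}$, and careful tracking of the prefactor $c^3\delta^{1-\beta}(M_0+M_1\delta+M_2\delta^2)/(2\pi c\delta\Sigma_\delta^2)$ will complete the estimate. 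No conceptually new idea beyond the template of Lemma \ref{lem:Non-Analycity Domain} is required, but the bookkeeping in this last regime is the principal technical challenge.
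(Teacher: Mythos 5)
Your proposal is correct and follows essentially the same route as the paper: the same telescoping decomposition, the same separation of the $k=0$ term, the same split of the integration domain at $|\xi|=c\delta^{1+\beta}$, and the same split of the $k$-sum at $\left[\frac{N}{2}\right]$ on the annulus. The one place you leave vague --- producing the denominator $\sqrt{1-e^{-2\pi^{2}Nc^{2}\delta^{2}\Sigma_{\delta}^{2}}}$ in the third term --- is handled in the paper not by Cauchy--Schwarz but by writing the annulus integral as a difference of two disk integrals, applying the two-sided Gaussian estimate of Lemma \ref{lem:Gaussian-integral-estimation}, and rationalizing the resulting difference of square roots before invoking Lemma \ref{lem:Special-Sums-Evaluation}.
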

\begin{rem}
The coefficients $M_{0},M_{1}$and $M_{2}$ play a major role in the
estimation. Notice that we can get a better result if have that $M_{0}=0$
and an even better result if both $M_{0}$ and $M_{1}$ are zero. \end{rem}
\begin{proof}
Similar to Lemma \ref{lem:Non-Analycity Domain} we find that\[
\int_{|\xi|<c\delta}\left|g_{\delta}^{N}(\xi)-\gamma_{1}^{N}(\xi)\right|d\xi\leq\sum_{k=0}^{N-1}\int_{|\xi|<c\delta}\left|g_{\delta}(\xi)-\gamma_{1}(\xi)\right|\left|g_{\delta}(\xi)\right|^{N-k-1}\left|\gamma_{1}(\xi)\right|^{k}d\xi\]
 \[
\leq\int_{|\xi|<c\delta}\left(\frac{M_{0}}{\delta^{2}}+\frac{M_{1}}{\delta}+M_{2}\right)|\xi|^{3}d\xi+\sum_{k=1}^{N-1}\int_{|\xi|<c\delta}\left(\frac{M_{0}}{\delta^{2}}+\frac{M_{1}}{\delta}+M_{2}\right)|\xi|^{3}\left|g_{\delta}(\xi)\right|^{N-k-1}\left|\gamma_{1}(\xi)\right|^{k}d\xi\]
 \[
=\frac{c^{4}\delta^{2}\left(M_{0}+M_{1}\delta+M_{2}\delta^{2}\right)}{2}+\sum_{k=1}^{N-1}\int_{c\delta^{1+\beta}<|\xi|<c\delta}\left(\frac{M_{0}}{\delta^{2}}+\frac{M_{1}}{\delta}+M_{2}\right)|\xi|^{3}\left|g_{\delta}(\xi)\right|^{N-k-1}\left|\gamma_{1}(\xi)\right|^{k}d\xi\]
 \[
+\sum_{k=1}^{N-1}\int_{|\xi|<c\delta^{1+\beta}}\left(\frac{M_{0}}{\delta^{2}}+\frac{M_{1}}{\delta}+M_{2}\right)|\xi|^{3}\left|g_{\delta}(\xi)\right|^{N-k-1}\left|\gamma_{1}(\xi)\right|^{k}d\xi\]
 We have that\[
\sum_{k=1}^{N-1}\int_{c\delta^{1+\beta}<|\xi|<c\delta}\left(\frac{M_{0}}{\delta^{2}}+\frac{M_{1}}{\delta}+M_{2}\right)|\xi|^{3}\left|g_{\delta}(\xi)\right|^{N-k-1}\left|\gamma_{1}(\xi)\right|^{k}d\xi\]
 \[
\leq c^{3}\delta\left(M_{0}+M_{1}\delta+M_{2}\delta^{2}\right)\sum_{k=1}^{N-1}\left(1-\alpha_{\beta}(\delta)\right)^{N-k-1}\int_{c\delta^{1+\beta}<|\xi|<c\delta}e^{-2k\pi^{2}\xi^{2}\Sigma_{\delta}^{2}}d\xi\]
 \[
\leq c^{3}\delta\left(M_{0}+M_{1}\delta+M_{2}\delta^{2}\right)\left(1-\alpha_{\beta}(\delta)\right)^{\frac{N}{2}-1}\sum_{k=1}^{\left[\frac{N}{2}\right]}\int_{|\xi|<c\delta}e^{-2k\pi^{2}\xi^{2}\Sigma_{\delta}^{2}}d\xi\]
 \[
+c^{3}\delta\left(M_{0}+M_{1}\delta+M_{2}\delta^{2}\right)\sum_{k=\left[\frac{N}{2}\right]+1}^{N-1}\int_{c\delta^{1+\beta}<|\xi|<c\delta}e^{-2k\pi^{2}\xi^{2}\Sigma_{\delta}^{2}}d\xi\]
 \[
\leq c^{3}\delta\left(M_{0}+M_{1}\delta+M_{2}\delta^{2}\right)\left(1-\alpha_{\beta}(\delta)\right)^{\frac{N}{2}-1}\sum_{k=1}^{\left[\frac{N}{2}\right]}\frac{\sqrt{1-e^{-4\pi^{2}kc^{2}\delta^{2}\Sigma_{\delta}^{2}}}}{\sqrt{2\pi\Sigma_{\delta}^{2}k}}\]
 \[
+c^{3}\delta\left(M_{0}+M_{1}\delta+M_{2}\delta^{2}\right)\sum_{k=\left[\frac{N}{2}\right]+1}^{N-1}\left(\int_{|\xi|<c\delta}e^{-2k\pi^{2}\xi^{2}\Sigma_{\delta}^{2}}d\xi-\int_{|\xi|<c\delta^{1+\beta}}e^{-2k\pi^{2}\xi^{2}\Sigma_{\delta}^{2}}d\xi\right)\]
 \[
\leq c^{3}\delta\left(M_{0}+M_{1}\delta+M_{2}\delta^{2}\right)\left(1-\alpha_{\beta}(\delta)\right)^{\frac{N}{2}-1}\sum_{k=1}^{\left[\frac{N}{2}\right]}\frac{1}{\sqrt{2\pi\Sigma_{\delta}^{2}k}}\]
 \[
+c^{3}\delta\left(M_{0}+M_{1}\delta+M_{2}\delta^{2}\right)\sum_{k=\left[\frac{N}{2}\right]+1}^{N-1}\frac{\left(\sqrt{1-e^{-4\pi^{2}kc^{2}\delta^{2}\Sigma_{\delta}^{2}}}-\sqrt{1-e^{-2\pi^{2}kc^{2}\delta^{2+2\beta}\Sigma_{\delta}^{2}}}\right)}{\sqrt{2\pi k\Sigma_{\delta}^{2}}}\]
 \[
\leq\frac{c^{3}\delta\left(M_{0}+M_{1}\delta+M_{2}\delta^{2}\right)\left(1-\alpha_{\beta}(\delta)\right)^{\frac{N}{2}-1}}{\sqrt{2\pi\Sigma_{\delta}^{2}}}\cdot\sqrt{4\left[\frac{N}{2}\right]}\]
 \[
+\frac{c^{3}\delta\left(M_{0}+M_{1}\delta+M_{2}\delta^{2}\right)}{\sqrt{2\pi\Sigma_{\delta}^{2}}}\sum_{k=\left[\frac{N}{2}\right]+1}^{N-1}\frac{1}{\sqrt{k}}\cdot\frac{e^{-2\pi^{2}kc^{2}\delta^{2+2\beta}\Sigma_{\delta}^{2}}-e^{-4\pi^{2}kc^{2}\delta^{2}\Sigma_{\delta}^{2}}}{\left(\sqrt{1-e^{-4\pi^{2}kc^{2}\delta^{2}\Sigma_{\delta}^{2}}}+\sqrt{1-e^{-2\pi^{2}kc^{2}\delta^{2+2\beta}\Sigma_{\delta}^{2}}}\right)}\]
 \[
\leq\frac{c^{3}\delta\sqrt{N}\left(M_{0}+M_{1}\delta+M_{2}\delta^{2}\right)\left(1-\alpha_{\beta}(\delta)\right)^{\frac{N}{2}-1}}{\sqrt{\pi\Sigma_{\delta}^{2}}}\]
 \[
+\frac{c^{3}\delta\left(M_{0}+M_{1}\delta+M_{2}\delta^{2}\right)}{\sqrt{2\pi\Sigma_{\delta}^{2}}}\sum_{k=\left[\frac{N}{2}\right]+1}^{N-1}\frac{1}{\sqrt{k}}\cdot\frac{e^{-2\pi^{2}kc^{2}\delta^{2+2\beta}\Sigma_{\delta}^{2}}}{\sqrt{1-e^{-4\pi^{2}kc^{2}\delta^{2}\Sigma_{\delta}^{2}}}}\]
 \[
\leq\frac{c^{3}\delta\sqrt{N}\left(M_{0}+M_{1}\delta+M_{2}\delta^{2}\right)\left(1-\alpha_{\beta}(\delta)\right)^{\frac{N}{2}-1}}{\sqrt{\pi\Sigma_{\delta}^{2}}}\]
 \[
+\frac{c^{3}\delta\left(M_{0}+M_{1}\delta+M_{2}\delta^{2}\right)}{\sqrt{2\pi\Sigma_{\delta}^{2}}\cdot\sqrt{1-e^{-2\pi^{2}Nc^{2}\delta^{2}\Sigma_{\delta}^{2}}}}\sum_{k=\left[\frac{N}{2}\right]+1}^{N-1}\frac{e^{-2\pi^{2}kc^{2}\delta^{2+2\beta}\Sigma_{\delta}^{2}}}{\sqrt{k}}\]
 \[
\leq\frac{c^{3}\delta\sqrt{N}\left(M_{0}+M_{1}\delta+M_{2}\delta^{2}\right)\left(1-\alpha_{\beta}(\delta)\right)^{\frac{N}{2}-1}}{\sqrt{\pi\Sigma_{\delta}^{2}}}+\frac{c^{3}\delta^{1-\beta}\left(M_{0}+M_{1}\delta+M_{2}\delta^{2}\right)e^{-\pi^{2}(N-1)c^{2}\delta^{2+2\beta}\Sigma_{\delta}^{2}}}{2\pi c\delta\Sigma_{\delta}^{2}\cdot\sqrt{1-e^{-2\pi^{2}Nc^{2}\delta^{2}\Sigma_{\delta}^{2}}}}\]
 Next we find that\[
\sum_{k=1}^{N-1}\int_{|\xi|<c\delta^{1+\beta}}\left(\frac{M_{0}}{\delta^{2}}+\frac{M_{1}}{\delta}+M_{2}\right)|\xi|^{3}\left|g_{\delta}(\xi)\right|^{N-k-1}\left|\gamma_{1}(\xi)\right|^{k}d\xi\]
 \[
\leq c^{3}\left(M_{0}+M_{1}\delta+M_{2}\delta^{2}\right)\delta^{1+3\beta}\cdot\sum_{k=1}^{N-1}\int_{|\xi|<c\delta^{1+\beta}}e^{-2k\pi^{2}\xi^{2}\Sigma_{\delta}^{2}}d\xi\]
 \[
\leq c^{3}\left(M_{0}+M_{1}\delta+M_{2}\delta^{2}\right)\delta^{1+3\beta}\cdot\sum_{k=1}^{N-1}\frac{\sqrt{1-e^{-4k\pi^{2}c^{2}\delta^{2+2\beta}\Sigma_{\delta}^{2}}}}{\sqrt{2\pi k\Sigma_{\delta}^{2}}}\]
 \[
\leq\frac{c^{3}\left(M_{0}+M_{1}\delta+M_{2}\delta^{2}\right)\delta^{1+3\beta}}{\sqrt{2\pi\Sigma_{\delta}^{2}}}\cdot\sum_{k=1}^{N-1}\frac{1}{\sqrt{k}}\]
 \[
\leq\frac{2c^{3}\left(M_{0}+M_{1}\delta+M_{2}\delta^{2}\right)\sqrt{N}\delta^{1+3\beta}}{\sqrt{2\pi\Sigma_{\delta}^{2}}}\]
 Which completes the proof.\end{proof}
\begin{thm}
\label{thm:uniform approximation of convulotion}Let $h_{\delta}(x)=h_{\delta_{N}}(x)$
be a function such that $g_{\delta}(\xi)=\widehat{h_{\delta}}(\xi)$
satisfies

$(i)$ for $|\xi|>c\delta_{N}$ $|g_{\delta_{N}}(\xi)|\leq1-\alpha(\delta_{N})$,
where $\alpha(\delta_{N})>0$

$(ii)$ there exist $M_{0},M_{1},M_{2}>0$ such that $\sup_{|\xi|<c\delta_{N}}\left|g_{\delta_{N}}(\xi)-\gamma_{1}(\xi)\right|\leq\left(\frac{M_{0}}{\delta_{N}^{2}}+\frac{M_{1}}{\delta_{N}}+M_{2}\right)|\xi|^{3}$

$(iii)$ for $c\delta_{N}^{1+\beta}<|\xi|<c\delta_{N}$ $|g_{\delta_{N}}(\xi)|\leq1-\alpha_{\beta}(\delta_{N})$
where $\alpha_{\beta}(\delta_{N})>0$

$(vi)$ $|g_{\delta_{N}}(\xi)|\leq1$ for all $\xi$

and if\begin{equation}
\begin{array}{c}
\delta_{N},\alpha(\delta_{N})\, and\,\alpha_{\beta}(\delta_{N})\, are\, domianted\, by\, powers\, of\, N\\
\alpha(\delta_{N})N\underset{N\rightarrow\infty}{\longrightarrow}\infty\\
\alpha_{\beta}(\delta_{N})N\underset{N\rightarrow\infty}{\longrightarrow}\infty\\
\Sigma_{\delta_{N}}^{2}\delta_{N}^{2+2\beta}N\underset{N\rightarrow\infty}{\longrightarrow}\infty\\
\delta_{N}^{1+3\beta}N\underset{N\rightarrow\infty}{\longrightarrow}0\\
\sqrt{N}\Sigma_{\delta_{N}}\int_{|\xi|>c\delta_{N}}\left|g_{\delta_{N}}(\xi)\right|^{N-1}d\xi\underset{N\rightarrow\infty}{\longrightarrow}0\\
\delta_{N}^{\frac{3}{2}(1-\beta)}\Sigma_{\delta_{N}}\, is\, bounded\end{array}\label{eq:condition}\end{equation}
 then\[
\sup_{x}\left|h_{\delta_{N}}^{*N}(x)-\frac{1}{\sqrt{N}\Sigma_{\delta_{N}}}\cdot\frac{e^{-\frac{\left(x-N\right)^{2}}{2N\Sigma_{\delta_{N}}^{2}}}}{\sqrt{2\pi}}\right|\leq\frac{\epsilon(N)}{\sqrt{N}\Sigma_{\delta_{N}}}\]
 where $h_{\delta_{N}}^{*N}(x)$ is the $N$-fold convolution and
$\epsilon(N)\underset{N\rightarrow\infty}{\longrightarrow}0$. \end{thm}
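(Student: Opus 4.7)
The plan is to reduce the claim to an $L^{1}$ estimate on the difference of Fourier transforms and then feed it into Lemmas~\ref{lem:Non-Analycity Domain} and~\ref{lem:Analytic Domain}. The Fourier transform of the Gaussian $(2\pi N\Sigma_{\delta_N}^{2})^{-1/2}\exp\!\bigl(-(x-N)^{2}/(2N\Sigma_{\delta_N}^{2})\bigr)$ is exactly $\gamma_{1}^{N}(\xi)=e^{-2\pi iN\xi}\cdot e^{-2\pi^{2}N\xi^{2}\Sigma_{\delta_N}^{2}}$, so Fourier inversion together with the trivial bound $|e^{2\pi i\xi x}|=1$ yields
\begin{equation*}
\sup_{x}\Bigl|h_{\delta_N}^{*N}(x)-\tfrac{1}{\sqrt{2\pi N}\,\Sigma_{\delta_N}}\,e^{-(x-N)^{2}/(2N\Sigma_{\delta_N}^{2})}\Bigr|\,\le\,\int_{\mathbb{R}}\bigl|g_{\delta_N}^{N}(\xi)-\gamma_{1}^{N}(\xi)\bigr|\,d\xi.
\end{equation*}
The integrand lies in $L^{1}$: on $|\xi|<c\delta_{N}$ both $|g_{\delta_N}|$ and $|\gamma_{1}|$ are bounded, and on $|\xi|>c\delta_{N}$ the hypothesis $|g_{\delta_N}|\le 1$ together with the Gaussian tail of $\gamma_{1}^{N}$ handles integrability.

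Next I split the integration region at $|\xi|=c\delta_{N}$, apply Lemma~\ref{lem:Analytic Domain} on the inner disc and Lemma~\ref{lem:Non-Analycity Domain} on the exterior, and multiply the resulting seven-term bound by $\sqrt{N}\,\Sigma_{\delta_N}$ to put it in the asserted form $\varepsilon(N)/(\sqrt{N}\,\Sigma_{\delta_N})$. What remains is to verify that each of the seven summands, after this scaling, is $o(1)$, and the hypotheses (\ref{eq:condition}) are engineered for exactly this purpose. In detail: the hypothesis $\sqrt{N}\Sigma_{\delta_N}\int_{|\xi|>c\delta_N}|g_{\delta_N}|^{N-1}d\xi\to 0$ directly kills the leading non-analytic term; the hypotheses $\alpha(\delta_{N})N,\alpha_{\beta}(\delta_{N})N\to\infty$ together with polynomial domination of $\delta_{N}$ and $\Sigma_{\delta_N}$ kill every factor of the form $(1-\alpha)^{N/2-1}$; the hypothesis $\Sigma_{\delta_N}^{2}\delta_{N}^{2+2\beta}N\to\infty$ kills both Gaussian-tail terms $e^{-(1+N)\pi^{2}c^{2}\delta_{N}^{2}\Sigma_{\delta_N}^{2}}$ and $e^{-\pi^{2}(N-1)c^{2}\delta_{N}^{2+2\beta}\Sigma_{\delta_N}^{2}}$ while keeping the denominator $\sqrt{1-e^{-2\pi^{2}Nc^{2}\delta_{N}^{2}\Sigma_{\delta_N}^{2}}}$ bounded below; and $\delta_{N}^{1+3\beta}N\to 0$ with $\delta_{N}^{3(1-\beta)/2}\Sigma_{\delta_N}$ bounded disposes of the two purely polynomial remainders, the first via the factorisation $\sqrt{N}\Sigma_{\delta_N}\delta_{N}^{2}=\sqrt{N\delta_{N}^{1+3\beta}}\cdot\delta_{N}^{3(1-\beta)/2}\Sigma_{\delta_N}\to 0$, the second directly since the $\Sigma_{\delta_N}$ in the denominator cancels against the $\sqrt{N}\Sigma_{\delta_N}$ multiplier to leave $N\delta_{N}^{1+3\beta}\to 0$.

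The main obstacle is not any single estimate but the bookkeeping of matching each of the seven error terms coming out of Lemmas~\ref{lem:Non-Analycity Domain}--\ref{lem:Analytic Domain} with the appropriate hypothesis in (\ref{eq:condition}). The tension is genuine: the cubic Taylor remainder on the inner disc prefers $\delta_{N}$ small, whereas the exterior estimates require $\delta_{N}^{2+2\beta}\Sigma_{\delta_N}^{2}N$ large, and it is precisely the intermediate scale $c\delta_{N}^{1+\beta}$ together with the exponent $1+3\beta$ that allows both ends to be satisfied simultaneously. Beyond this balancing act, the argument is the standard quantitative local-CLT scheme applied to a non-fixed sequence of densities $h_{\delta_N}$.
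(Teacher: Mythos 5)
Your argument is correct and coincides with the paper's own proof: the same Fourier-inversion reduction to $\int_{\mathbb{R}}|g_{\delta}^{N}-\gamma_{1}^{N}|\,d\xi$, the same split at $|\xi|=c\delta_{N}$ feeding into Lemmas \ref{lem:Analytic Domain} and \ref{lem:Non-Analycity Domain}, and the same matching of each error term to a hypothesis in (\ref{eq:condition}), including the factorisation $\sqrt{N}\Sigma_{\delta_N}\delta_{N}^{2}=\sqrt{N\delta_{N}^{1+3\beta}}\cdot\delta_{N}^{\frac{3}{2}(1-\beta)}\Sigma_{\delta_N}$ that the paper itself uses for the cubic remainder term. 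No gaps.
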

\begin{proof}
It is easy to check that $\widehat{\frac{1}{\sqrt{N}\Sigma_{\delta}}\cdot\frac{e^{-\frac{\left(x-N\right)^{2}}{2N\Sigma_{\delta}^{2}}}}{\sqrt{2\pi}}}(\xi)=\gamma_{1}^{N}(\xi)$

Using Lemma \ref{lem:Non-Analycity Domain} and \ref{lem:Analytic Domain}
we find that \[
\sup_{x}\left|h_{\delta}^{*N}(x)-\frac{1}{\sqrt{N}\Sigma_{\delta}}\cdot\frac{e^{-\frac{\left(x-N\right)^{2}}{2N\Sigma_{\delta}^{2}}}}{\sqrt{2\pi}}\right|\leq\int_{\mathbb{R}}\left|g_{\delta}^{N}(\xi)-\gamma_{1}^{N}(\xi)\right|d\xi\]
 \[
=\int_{|\xi|<c\delta}\left|g_{\delta}^{N}(\xi)-\gamma_{1}^{N}(\xi)\right|d\xi+\int_{|\xi|>c\delta}\left|g_{\delta}^{N}(\xi)-\gamma_{1}^{N}(\xi)\right|d\xi\]
 \[
\leq\frac{1}{\sqrt{N}\Sigma_{\delta}}\left(\frac{c^{4}\sqrt{N\delta^{1+3\beta}}\delta^{\frac{3}{2}(1-\beta)}\Sigma_{\delta}\left(M_{0}+M_{1}\delta+M_{2}\delta^{2}\right)}{2}\right.\]
 \[
+\frac{c^{3}\delta N\left(M_{0}+M_{1}\delta+M_{2}\delta^{2}\right)\left(1-\alpha_{\beta}(\delta)\right)^{\frac{N}{2}-1}}{\sqrt{\pi}}+\frac{c^{3}\sqrt{N}\delta^{1-\beta}\left(M_{0}+M_{1}\delta+M_{2}\delta^{2}\right)e^{-\pi^{2}(N-1)c^{2}\delta^{2+2\beta}\Sigma_{\delta}^{2}}}{2\pi c\delta\Sigma_{\delta}\cdot\sqrt{1-e^{-2\pi^{2}Nc^{2}\delta^{2}\Sigma_{\delta}^{2}}}}\]
 \[
+\frac{2c^{3}\left(M_{0}+M_{1}\delta+M_{2}\delta^{2}\right)N\delta^{1+3\beta}}{\sqrt{2\pi}}+2\sqrt{N}\Sigma_{\delta}\int_{|\xi|>c\delta}\left|g_{\delta}(\xi)\right|^{N-1}d\xi\]
 \[
\left.+2\left(1-\alpha(\delta)\right)^{\frac{N}{2}-1}\cdot\frac{\sqrt{N}}{2\pi c\delta\Sigma_{\delta}}+\frac{\sqrt{N}}{\pi c\delta\Sigma_{\delta}}\cdot e^{-(1+N)\pi^{2}c^{2}\delta^{2}\Sigma_{\delta}^{2}}\right)\]
 Conditions (\ref{eq:condition}) insure the desired conclusion.\end{proof}
\begin{rem}
\label{rem:epsilon j}A careful look at the proof of Theorem \ref{thm:uniform approximation of convulotion}
shows that for a fixed $j$ if $\lim_{N\rightarrow\infty}\sqrt{N-j}\Sigma_{\delta_{N}}\int_{|\xi|>c\delta_{N}}\left|g_{\delta_{N}}(\xi)\right|^{N-j-1}d\xi=0$
and conditions (\ref{eq:condition}) are satisfied (with the obvious
change) then\[
\sup_{x}\left|h_{\delta_{N}}^{*N-j}(x)-\frac{1}{\sqrt{N-j}\Sigma_{\delta_{N}}}\cdot\frac{e^{-\frac{\left(x-N+j\right)^{2}}{2(N-j)\Sigma_{\delta_{N}}^{2}}}}{\sqrt{2\pi}}\right|\leq\frac{\epsilon_{j}(N)}{\sqrt{N-j}\Sigma_{\delta_{N}}}\]
 where $\epsilon_{j}(N)\underset{N\rightarrow\infty}{\longrightarrow}0$.
\end{rem}

\section{Entropy Production and Villani's Conjecture\label{sec:Entropy-Production}}

In this section we'll find an exact estimation for the entropy production.
The idea behind this estimation is to use superposition of stationary
solutions for the Boltzmann equation: the Maxwellian densities $M_{a}(v)=\frac{e^{-\frac{b^{2}}{2a}}}{\sqrt{2\pi a}}$.
This idea was exploited by Carlen, Carvalho, Le Roux, Loss, and Villani
(\cite{key-7}) and Bobylev and Cercignani (\cite{key-1}) before
them.

The basic one particle function would be\[
f_{\delta_{N}}(v)=f_{\delta}(v)=\delta M_{\frac{1}{2\delta}}(v)+(1-\delta)M_{\frac{1}{2(1-\delta)}}(v)\]
 This function has the property that both its parts have the same
energy\[
\int_{\mathbb{R}}\delta M_{\frac{1}{2\delta}}(v)dv=\int_{\mathbb{R}}(1-\delta)M_{\frac{1}{2(1-\delta)}}(v)dv=\frac{1}{2}\]
 while as $\delta$ gets smaller the number of particles represented
by $\delta M_{\frac{1}{2\delta}}(v)$ is far smaller than those represented
by $(1-\delta)M_{\frac{1}{2(1-\delta)}}(v)$. The fact that we have
a small number of very energetic particles and a large number of very
stable particles trying to equilibrate will cause slow decay into
equilibrium. That physical intuition is indeed true as would be seen
shortly. 
\begin{lem}
\label{lem:Properties of h}Let $h_{\delta}(u)=\frac{f_{\delta}(\sqrt{u})+f_{\delta}(-\sqrt{u})}{2\sqrt{u}}=\frac{f_{\delta}(\sqrt{u})}{\sqrt{u}}$
then

$(i)$ $\int_{0}^{\infty}h_{\delta}(u)du=1$

$(ii)$ $\int_{0}^{\infty}uh_{\delta}(u)du=1$

$(iii)$ $\Sigma_{\delta}^{2}=\int_{0}^{\infty}u^{2}h_{\delta}(u)du-\left(\int_{0}^{\infty}uh_{\delta}(u)du\right)^{2}=\frac{3}{4\delta(1-\delta)}-1$

$(iv)$ $\widehat{h_{\delta}}(\xi)=\frac{\delta}{\sqrt{1+\frac{2\pi i\xi}{\delta}}}+\frac{1-\delta}{\sqrt{1+\frac{2\pi i\xi}{1-\delta}}}$\end{lem}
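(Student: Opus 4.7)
The plan is to exploit the probabilistic interpretation from Lemma \ref{lem: Definition of h}: if $V$ has density $f_{\delta}$, then $h_{\delta}$ is the density of $U=V^{2}$, and the distribution of $V$ is the mixture $\delta\cdot N(0,\frac{1}{2\delta}) + (1-\delta)\cdot N(0,\frac{1}{2(1-\delta)})$. With this in hand, parts (i)--(iii) reduce immediately to well-known moment identities for centered Gaussians, while part (iv) is a direct evaluation of a Laplace-type integral.

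Part (i) is just the statement that $h_{\delta}$ is a probability density, which is automatic. For (ii), using $E[V^{2}] = a$ whenever $V\sim N(0,a)$,
\[
\int_{0}^{\infty}u\, h_{\delta}(u)\,du = E[V^{2}] = \delta\cdot\frac{1}{2\delta} + (1-\delta)\cdot\frac{1}{2(1-\delta)} = 1.
\]
For (iii), $E[V^{4}] = 3a^{2}$ for $V\sim N(0,a)$, so the mixture gives
\[
\int_{0}^{\infty}u^{2}h_{\delta}(u)\,du = E[V^{4}] = 3\delta\cdot\frac{1}{4\delta^{2}} + 3(1-\delta)\cdot\frac{1}{4(1-\delta)^{2}} = \frac{3}{4\delta(1-\delta)},
\]
and subtracting $(E[V^{2}])^{2}=1$ produces the stated value of $\Sigma_{\delta}^{2}$.

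For part (iv), I would first write $h_{\delta}$ explicitly. Substituting the Gaussians into $f_{\delta}(\sqrt{u})/\sqrt{u}$ and simplifying yields
\[
h_{\delta}(u) = \frac{1}{\sqrt{\pi u}}\Bigl(\delta^{3/2}e^{-\delta u} + (1-\delta)^{3/2}e^{-(1-\delta)u}\Bigr),
\]
exhibiting $h_{\delta}$ as a convex combination of two gamma$(1/2)$ densities. Its Fourier transform then splits into two terms each of the form $\frac{a^{3/2}}{\sqrt{\pi}}\int_{0}^{\infty}u^{-1/2}e^{-(a+2\pi i\xi)u}\,du$; using the identity $\int_{0}^{\infty}u^{-1/2}e^{-bu}\,du = \sqrt{\pi/b}$, valid for $\mathrm{Re}(b)>0$ with the principal branch of the square root (extended from real $b$ by analyticity), gives $a^{3/2}/\sqrt{a+2\pi i\xi}$. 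Factoring $\sqrt{\delta}$ (respectively $\sqrt{1-\delta}$) out of the two resulting square roots yields exactly the claimed expression.

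No real obstacle appears: once one recognizes $f_{\delta}$ as a Gaussian mixture and $h_{\delta}$ as the corresponding mixture of gamma$(1/2)$ densities, every assertion is a short moment or Laplace-integral computation. The only minor point worth care is the analytic continuation of $\int_{0}^{\infty}u^{-1/2}e^{-bu}\,du$ to complex $b$ with positive real part, fixing the principal branch of $\sqrt{\cdot}$ so that the resulting Fourier transform is continuous at $\xi=0$ with value $\delta+(1-\delta)=1$.
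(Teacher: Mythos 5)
Your parts (i)--(iii) coincide with the paper's argument: both reduce the moments of $h_{\delta}$ to even moments of the Gaussian mixture via $\int_{0}^{\infty}u^{m}h_{\delta}(u)\,du=\int_{\mathbb{R}}x^{2m}f_{\delta}(x)\,dx$ and the identities $\int M_{a}=1$, $\int u^{2}M_{a}=a$, $\int u^{4}M_{a}=3a^{2}$; nothing to add there. For part (iv) you take a genuinely different route. The paper computes $\widehat{h_{\delta}}(\xi)=\int_{\mathbb{R}}f_{\delta}(u)e^{-2\pi i\xi u^{2}}du$ by showing that $\varphi(\xi)=\int M_{a}(u)e^{-2\pi i\xi u^{2}}du$ solves the initial value problem $\varphi'(\xi)=\frac{-2\pi ia}{1+4\pi ia\xi}\varphi(\xi)$, $\varphi(0)=1$, whose unique solution is $(1+4\pi ia\xi)^{-1/2}$; the branch of the square root is thereby pinned down by the ODE rather than chosen by hand. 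You instead write $h_{\delta}$ explicitly as a convex combination of gamma$(1/2)$ densities, $h_{\delta}(u)=\frac{1}{\sqrt{\pi u}}\bigl(\delta^{3/2}e^{-\delta u}+(1-\delta)^{3/2}e^{-(1-\delta)u}\bigr)$ (which checks out against $M_{a}(v)=e^{-v^{2}/2a}/\sqrt{2\pi a}$), and evaluate $\int_{0}^{\infty}u^{-1/2}e^{-bu}du=\sqrt{\pi/b}$ for $\mathrm{Re}(b)>0$ by analytic continuation from real $b$, with the principal branch. Both arguments are correct and about equally short; yours is more computational and makes the gamma structure of $h_{\delta}$ explicit (which is also a quick sanity check on (i)--(iii)), while the paper's ODE device generalizes more readily to situations where the density of $V^{2}$ is not available in closed form. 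Your remark about fixing the branch so that $\widehat{h_{\delta}}(0)=1$ is exactly the right point of care.
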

\begin{proof}
$(i)-(iii)$ follow immediately from the fact that $\int_{0}^{\infty}u^{m}h_{\delta}(u)du=\int_{\mathbb{R}}x^{2m}f_{\delta}(x)dx$
and the fact that\[
\int_{\mathbb{R}}M_{a}(u)du=1,\,\,\int_{\mathbb{R}}u^{2}M_{a}(u)du=a,\,\,\int_{\mathbb{R}}u^{4}M_{a}(u)du=3a^{2}\]
 We're only left with proving $(iv)$.

It is easy to check that\[
\frac{d}{d\xi}\int_{\mathbb{R}}M_{a}(u)\cdot e^{-2\pi i\xi u^{2}}du=\frac{-2\pi ia}{1+4\pi ia\xi}\int_{\mathbb{R}}M_{a}(u)\cdot e^{-2\pi i\xi u^{2}}du\]
 The initial value problem $\frac{d}{d\xi}\varphi(\xi)=\frac{-2\pi ia}{1+4\pi ia\xi}\varphi(\xi),\,\,\xi\in\mathbb{R}$,
$\varphi(0)=1$ has the unique solution\[
\varphi(\xi)=\frac{1}{\sqrt{1+4\pi ia\xi}}\]
 Thus, the result follows from the definition of $f_{\delta}$ and
the fact that \[
\widehat{h_{\delta}}(\xi)=\int_{0}^{\infty}h_{\delta}(u)e^{-2\pi i\xi u}du=\int_{\mathbb{R}}f_{\delta}(u)e^{-2\pi i\xi u^{2}}du\]
 \end{proof}
\begin{lem}
\label{lem:Special properties of h}Let $g_{\delta}(\xi)=\widehat{h_{\delta}}(\xi)$
where $\delta<\frac{1}{2}$ then

$(i)$ for $|\xi|>\frac{\delta}{4\pi}$ $|g_{\delta}(\xi)|\leq1-\delta\left(1-\sqrt[4]{\frac{4}{5}}\right)+\rho_{1}(\delta)$
where $\frac{\rho_{1}(\delta)}{\delta}\underset{\delta\rightarrow0}{\longrightarrow}0$

$(ii)$ there exist $M_{0},M_{1},M_{2}>0$ such that $\sup_{|\xi|<\frac{\delta}{4\pi}}\left|g_{\delta}(\xi)-\gamma_{1}(\xi)\right|\leq\left(\frac{M_{0}}{\delta^{2}}+\frac{M_{1}}{\delta}+M_{2}\right)|\xi|^{3}$.

$(iii)$ for $\frac{\delta^{1+\beta}}{4\pi}<|\xi|<\frac{\delta}{4\pi}$
$|g_{\delta}(\xi)|\leq1-\frac{\delta^{1+2\beta}}{16}+\rho_{2}(\delta)$
where $\frac{\rho_{2}(\delta)}{\delta^{1+2\beta}}\underset{\delta\rightarrow0}{\longrightarrow}0$

$(vi)$ $|g_{\delta}(\xi)|\leq1$ for all $\xi$.

$(v)$ for a fixed $j$ $\int_{|\xi|>\frac{\delta}{4\pi}}\left|g_{\delta_{N}}(\xi)\right|^{N-j-1}d\xi\leq\frac{\left(1-\delta\left(1-\sqrt[4]{\frac{4}{5}}\right)+\rho_{1}(\delta)\right)^{N-j-1}}{\pi}+\frac{2}{\pi(N-j)}$\end{lem}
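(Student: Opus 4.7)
The plan is to exploit the explicit factorization $g_\delta(\xi)=z_1(\xi)+z_2(\xi)$, where
\[
z_1(\xi)=\frac{\delta}{\sqrt{1+2\pi i\xi/\delta}},\qquad z_2(\xi)=\frac{1-\delta}{\sqrt{1+2\pi i\xi/(1-\delta)}},
\]
obtained from part $(iv)$ of Lemma \ref{lem:Properties of h}. In polar form,
\[
|z_1(\xi)|=\frac{\delta}{\bigl(1+(2\pi\xi/\delta)^2\bigr)^{1/4}},\qquad |z_2(\xi)|=\frac{1-\delta}{\bigl(1+(2\pi\xi/(1-\delta))^2\bigr)^{1/4}},
\]
and all five assertions will follow from these two expressions combined with the triangle inequality $|g_\delta|\le|z_1|+|z_2|$ and the standing hypothesis $\delta<\tfrac12$.

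Part $(iv)$ is immediate since $|z_1|\le\delta$ and $|z_2|\le 1-\delta$. For Part $(i)$, the condition $|\xi|>\delta/(4\pi)$ forces $(2\pi\xi/\delta)^2>1/4$, whence $|z_1|\le\delta(5/4)^{-1/4}=\delta\sqrt[4]{4/5}$; combining with $|z_2|\le 1-\delta$ yields the claimed bound, with $\rho_1(\delta)$ left as a cushion for any lower-order slack. Part $(iii)$ uses the same idea on the narrower range: at the left endpoint $|\xi|=\delta^{1+\beta}/(4\pi)$ we have $(2\pi\xi/\delta)^2=\delta^{2\beta}/4$, so the expansion $(1+x)^{-1/4}=1-x/4+O(x^2)$ gives $|z_1|\le\delta-\delta^{1+2\beta}/16+O(\delta^{1+4\beta})$, and $|z_2|\le 1-\delta$ supplies the rest, with $\rho_2(\delta)$ collecting the higher-order remainder.

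Part $(ii)$ is the analytic heart of the lemma. I would Taylor-expand both $g_\delta$ and $\gamma_1(\xi)=e^{-2\pi i\xi}e^{-2\pi^2\xi^2\Sigma_\delta^2}$ about $\xi=0$. Using the moment identities of Lemma \ref{lem:Properties of h}, namely $\int h_\delta=1$, $\int uh_\delta=1$ and $\int u^2h_\delta=1+\Sigma_\delta^2$, a direct computation shows that both functions share the order-two Taylor polynomial $1-2\pi i\xi-2\pi^2\xi^2(1+\Sigma_\delta^2)$, so Taylor's theorem yields $g_\delta(\xi)-\gamma_1(\xi)=O(|\xi|^3)$ with constant $(\sup|g_\delta'''|+\sup|\gamma_1'''|)/6$ over $|\xi|<\delta/(4\pi)$. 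Differentiating $z_1$ three times pulls down three factors of $1/\delta$ while leaving only one factor of $\delta$ in front, giving $|z_1'''|=O(1/\delta^2)$; the corresponding estimate for $z_2$ is $O(1)$ since $\delta<\tfrac12$; and differentiating $\gamma_1$ produces polynomials in $\Sigma_\delta^2\sim 3/(4\delta)$, contributing further $O(1/\delta)$ and $O(1)$ terms. Collecting these yields the stated coefficient $M_0/\delta^2+M_1/\delta+M_2$, and the careful bookkeeping of $\delta$-powers across the three differentiations is where I expect the main technical difficulty.

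For Part $(v)$, I would split $\int_{|\xi|>\delta/(4\pi)}$ at $|\xi|=1/(2\pi)$. On the inner piece $\delta/(4\pi)<|\xi|<1/(2\pi)$, Part $(i)$ supplies the uniform bound $(1-\delta(1-\sqrt[4]{4/5})+\rho_1(\delta))^{N-j-1}$, and since the region has total length at most $1/\pi$ this accounts for the first term in the claimed bound. On the tail $|\xi|>1/(2\pi)$, both $(2\pi\xi/\delta)^2$ and $(2\pi\xi/(1-\delta))^2$ exceed $1$, so $|z_k(\xi)|\le|z_k(0)|^{3/2}/\sqrt{2\pi|\xi|}$ and hence
\[
|g_\delta(\xi)|\le\frac{\delta^{3/2}+(1-\delta)^{3/2}}{\sqrt{2\pi|\xi|}}\le\frac{1}{\sqrt{2\pi|\xi|}},
\]
using the elementary convexity fact $\delta^{3/2}+(1-\delta)^{3/2}\le 1$ for $\delta\in[0,1]$. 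Integrating $(2\pi|\xi|)^{-(N-j-1)/2}$ over the tail then produces the $1/(N-j)$-type contribution matching the second claimed term.
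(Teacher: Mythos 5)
Your proposal is correct and follows essentially the same route as the paper: the triangle-inequality bounds on the two explicit Maxwellian summands give (i), (iii), (iv) exactly as in the text, the substitution-and-splitting argument for (v) is the paper's own (your split at $|\xi|=1/(2\pi)$ is the paper's split at $x=1/\delta$ after rescaling), and your Taylor-remainder treatment of (ii) is just a cleaner packaging of the paper's explicit expansions of $1/\sqrt{1+x}$ and $e^{x}$ --- the agreement of the quadratic Taylor polynomials forced by the moment identities is the same cancellation, and your bookkeeping $\sup|z_1'''|=O(\delta^{-2})$, $\sup|\gamma_1'''|=O(\delta^{-1})$ reproduces the stated $M_0/\delta^2+M_1/\delta+M_2$ structure. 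The only caveat, which applies equally to the paper's own proof, is that the tail integral in (v) actually comes out as $2/\bigl(\pi(N-j-3)\bigr)$ rather than the stated $2/\bigl(\pi(N-j)\bigr)$; this slack is harmless where the lemma is used.
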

\begin{proof}
$(i)$ For $|\xi|>\frac{\delta}{4\pi}$ \[
\left|g_{\delta}(\xi)\right|\leq\frac{\delta}{\sqrt[4]{1+\frac{4\pi^{2}\xi^{2}}{\delta^{2}}}}+\frac{1-\delta}{\sqrt[4]{1+\frac{4\pi^{2}\xi^{2}}{(1-\delta)^{2}}}}\leq\frac{\delta}{\sqrt[4]{\frac{5}{4}}}+\frac{1-\delta}{\sqrt[4]{1+\frac{\delta^{2}}{4(1-\delta)^{2}}}}\]
 \[
=\sqrt[4]{\frac{4}{5}}\delta+(1-\delta)\left(1-\frac{\delta^{2}}{16(1-\delta)^{2}}+\dots\right)=1-\delta\left(1-\sqrt[4]{\frac{4}{5}}\right)+\rho_{1}(\delta)\]
 where $\frac{\rho_{1}(\delta)}{\delta}\underset{\delta\rightarrow0}{\longrightarrow}0$.

$(ii)$ Using the expansions for $\frac{1}{\sqrt{1+x}}$ and $e^{x}$
we find that for $|\xi|<\frac{\delta}{4\pi}$\[
\left|h_{\delta}(\xi)-\gamma_{1}(\xi)\right|\leq|\xi|^{3}\left(\frac{8\pi^{3}}{\delta^{2}}\cdot\left|\phi\left(\frac{2\pi i\xi}{\delta}\right)\right|+\frac{8\pi^{3}}{(1-\delta)^{2}}\cdot\left|\phi\left(\frac{2\pi i\xi}{1-\delta}\right)\right|\right.\]
 \[
+\frac{3\pi^{3}}{\delta(1-\delta)}-4\pi^{3}+2\pi^{4}\left(\frac{3}{4\delta(1-\delta)}-1\right)^{2}|\xi|+\frac{3\pi^{4}}{\delta(1-\delta)}|\xi|-4\pi^{4}|\xi|\]
 \[
+4\pi^{5}\left(\frac{3}{4\delta(1-\delta)}-1\right)^{2}|\xi|^{2}+4\pi^{6}\left(\frac{3}{4\delta(1-\delta)}-1\right)^{2}|\xi|^{3}\]
 \[
\left.+8\pi^{3}\left|\psi\left(-2\pi i\xi\right)\right|+8\pi^{6}\left(\frac{3}{4\delta(1-\delta)}-1\right)^{3}|\xi|^{3}\left|\psi\left(-2\pi^{2}\Sigma_{\delta}^{2}\xi^{2}\right)\right|\right)\]
 where $\phi(x)$ is analytic in $|x|<\frac{1}{2}$ and $\psi(x)$
is an entire function. Denoting $M_{\phi}=\sup_{|x|\leq\frac{1}{2}}\left|\phi(x)\right|$
and $M_{\psi}=\sup_{|x|\leq\frac{1}{2}}\left|\psi(x)\right|$ we find
that \[
\left|h_{\delta}(\xi)-\gamma_{1}(\xi)\right|\leq\left(\frac{8\pi^{3}}{\delta^{2}}M_{\phi}+\frac{57\pi^{3}}{8\delta}+\pi^{3}\left(32M_{\phi}+\frac{141}{64}+\frac{539}{64}M_{\psi}\right)\right)|\xi|^{3}\]

$(iii)$ For $|\xi|>\frac{\delta^{1+\beta}}{4\pi}$ \[
\left|g_{\delta}(\xi)\right|\leq\frac{\delta}{\sqrt[4]{1+\frac{4\pi^{2}\xi^{2}}{\delta^{2}}}}+\frac{1-\delta}{\sqrt[4]{1+\frac{4\pi^{2}\xi^{2}}{(1-\delta)^{2}}}}\leq\frac{\delta}{\sqrt[4]{1+\frac{\delta^{2\beta}}{4}}}+\frac{1-\delta}{\sqrt[4]{1+\frac{\delta^{2+2\beta}}{4(1-\delta)^{2}}}}\]
 \[
=\delta\left(1-\frac{\delta^{2\beta}}{16}+\dots\right)+\left(1-\delta\right)\left(1-\frac{\delta^{2+2\beta}}{16(1-\delta)^{2}}+\dots\right)=1-\frac{\delta^{1+2\beta}}{16}+\rho_{2}(\delta)\]
 where $\frac{\rho_{2}(\delta)}{\delta^{1+2\beta}}\underset{\delta\rightarrow0}{\longrightarrow}0$.

$(iv)$ This is a general property of the Fourier transform of a density
function.

$(v)$ \[
\int_{|\xi|>\frac{\delta}{4\pi}}\left|g_{\delta_{N}}(\xi)\right|^{N-j-1}d\xi\leq\int_{|\xi|>\frac{\delta}{4\pi}}\left(\frac{\delta}{\sqrt[4]{1+\frac{4\pi^{2}\xi^{2}}{\delta^{2}}}}+\frac{1-\delta}{\sqrt[4]{1+\frac{4\pi^{2}\xi^{2}}{(1-\delta)^{2}}}}\right)^{N-j-1}d\xi\]
 \[
=\frac{\delta}{2\pi}\int_{|x|>\frac{1}{2}}\left(\frac{\delta}{\sqrt[4]{1+x^{2}}}+\frac{1-\delta}{\sqrt[4]{1+\frac{\delta^{2}x^{2}}{(1-\delta)^{2}}}}\right)^{N-j-1}dx\]
 \[
\leq\frac{\left(1-\delta\left(1-\sqrt[4]{\frac{4}{5}}\right)+\rho_{1}(\delta)\right)^{N-j-1}}{\pi}+\frac{\delta}{\pi}\int_{\frac{1}{\delta}}^{\infty}\left(\frac{\delta^{\frac{3}{2}}}{\sqrt{\delta x}}+\frac{(1-\delta)^{\frac{3}{2}}}{\sqrt{\delta x}}\right)^{N-j-1}dx\]
 \[
\leq\frac{\left(1-\delta\left(1-\sqrt[4]{\frac{4}{5}}\right)+\rho_{1}(\delta)\right)^{N-j-1}}{\pi}+\frac{2}{\pi(N-j-3)}\]
 \end{proof}
\begin{rem}
\label{rem:eps j is ok}Note that in our case\[
\sqrt{N-j}\Sigma_{\delta_{N}}\int_{|\xi|>c\delta_{N}}\left|g_{\delta_{N}}(\xi)\right|^{N-j-1}d\xi\]
 \[
\leq\frac{3\sqrt{N-j}\left(1-\delta\left(1-\sqrt[4]{\frac{4}{5}}\right)+\rho_{1}(\delta)\right)^{N-j-1}}{2\pi\delta}+\frac{3}{\sqrt{N\delta}\cdot\sqrt{1-\frac{j+3}{N}}}\]
 so as long as the conditions in (\ref{eq:condition}) are satisfied
we have that $\epsilon_{j}(N)$ defined in Remark \ref{rem:epsilon j}
would satisfy $\epsilon_{j}(N)\underset{N\rightarrow\infty}{\longrightarrow}0$.\end{rem}
\begin{thm}
\label{thm:approximation of Z}Let $f_{\delta_{N}}(v)=f_{\delta}(v)=\delta M_{\frac{1}{2\delta}}(v)+(1-\delta)M_{\frac{1}{2(1-\delta)}}(v)$
such that \begin{equation}
\begin{array}{c}
\delta_{N}\, is\, domianted\, by\, powers\, of\, N\\
\begin{array}{c}
\delta_{N}^{1+2\beta}\cdot N\underset{N\rightarrow\infty}{\longrightarrow}\infty\\
\delta_{N}^{1+3\beta}\cdot N\underset{N\rightarrow\infty}{\longrightarrow}0\end{array}\end{array}\label{eq:delta conditions}\end{equation}
 then for a fixed $j$ \[
Z_{N-j}\left(f_{\delta_{N}},\sqrt{u}\right)=\frac{2}{\sqrt{N-j}\cdot\Sigma_{\delta_{N}}\cdot|\mathbb{S}^{N-j-1}|u^{\frac{N-j}{2}-1}}\left(\frac{e^{-\frac{\left(u-N+j\right)^{2}}{2(N-j)\Sigma_{\delta_{N}}^{2}}}}{\sqrt{2\pi}}+\lambda_{j}(N-j,u)\right)\]
 where $\sup_{u\in\mathbb{R}}\left|\lambda_{j}(N-j,u)\right|\leq\epsilon_{j}(N)$
and $\lim_{N\rightarrow\infty}\epsilon_{j}(N)=0$. \end{thm}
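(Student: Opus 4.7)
The plan is a direct combination of results already established in the paper. By Corollary \ref{cor:Expression-for Z} applied with $N$ replaced by $N-j$,
\[
Z_{N-j}(f_{\delta_N},\sqrt{u}) \;=\; \frac{2\,h_{\delta_N}^{*(N-j)}(u)}{|\mathbb{S}^{N-j-1}|\,u^{(N-j)/2-1}},
\]
so the desired asymptotic formula is equivalent to a uniform Gaussian approximation of $h_{\delta_N}^{*(N-j)}(u)$ with mean $N-j$ and variance $(N-j)\Sigma_{\delta_N}^2$. That approximation is precisely the conclusion of Theorem \ref{thm:uniform approximation of convulotion}, extended to the shifted exponent $N-j$ via Remark \ref{rem:epsilon j}. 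Hence the entire task reduces to verifying, with $c=1/(4\pi)$, the hypotheses of Theorem \ref{thm:uniform approximation of convulotion} for $g_{\delta_N}=\widehat{h_{\delta_N}}$ along the sequence $\delta_N$ prescribed by (\ref{eq:delta conditions}).

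The four structural assumptions $(i)$--$(iv)$ are delivered essentially for free by Lemma \ref{lem:Special properties of h}: parts $(i)$ and $(iii)$ supply the explicit decay functions $\alpha(\delta)\asymp\delta$ and $\alpha_\beta(\delta)\asymp\delta^{1+2\beta}$; part $(ii)$ gives the cubic bound near the origin with the required constants $M_0,M_1,M_2$; and part $(iv)$ is the trivial global bound $|g_\delta|\le 1$. The tail hypothesis needed to invoke Remark \ref{rem:epsilon j}, namely the smallness of $\sqrt{N-j}\,\Sigma_{\delta_N}\int_{|\xi|>c\delta_N}|g_{\delta_N}(\xi)|^{N-j-1}\,d\xi$, is exactly what Lemma \ref{lem:Special properties of h}(v) combined with Remark \ref{rem:eps j is ok} provides.

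What remains is bookkeeping: the list of asymptotic conditions in (\ref{eq:condition}) must be read off from (\ref{eq:delta conditions}) and the asymptotic $\Sigma_{\delta_N}^2\sim 3/(4\delta_N)$ coming from Lemma \ref{lem:Properties of h}(iii). With that asymptotic, $\alpha_\beta(\delta_N)N\to\infty$ becomes $\delta_N^{1+2\beta}N\to\infty$ (assumed), $\alpha(\delta_N)N\to\infty$ is the weaker statement $\delta_N N\to\infty$ (implied, since $\delta_N\to 0$), $\Sigma_{\delta_N}^{2}\delta_N^{2+2\beta}N\sim(3/4)\delta_N^{1+2\beta}N\to\infty$, the condition $\delta_N^{1+3\beta}N\to 0$ is assumed, and the boundedness of $\delta_N^{3(1-\beta)/2}\Sigma_{\delta_N}\sim(\sqrt{3}/2)\,\delta_N^{1-3\beta/2}$ is automatic since $\beta<2/3$ forces a positive exponent and $\delta_N\to 0$.

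Once all hypotheses are verified, Theorem \ref{thm:uniform approximation of convulotion} together with Remark \ref{rem:epsilon j} yields
\[
\sup_{u}\left|h_{\delta_N}^{*(N-j)}(u)-\frac{1}{\sqrt{N-j}\,\Sigma_{\delta_N}}\cdot\frac{e^{-(u-N+j)^{2}/(2(N-j)\Sigma_{\delta_N}^{2})}}{\sqrt{2\pi}}\right|\le\frac{\epsilon_j(N)}{\sqrt{N-j}\,\Sigma_{\delta_N}},
\]
and substituting into the Corollary \ref{cor:Expression-for Z} identity above produces the stated expression for $Z_{N-j}(f_{\delta_N},\sqrt{u})$ with $\lambda_j(N-j,u)$ equal to $\sqrt{N-j}\,\Sigma_{\delta_N}$ times the approximation error, hence uniformly bounded by $\epsilon_j(N)\to 0$. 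The only real obstacle is the translation layer in the previous paragraph, converting the general asymptotic hypotheses of Theorem \ref{thm:uniform approximation of convulotion} into concrete polynomial-in-$\delta_N$ requirements and checking them against (\ref{eq:delta conditions}); everything else is substitution.
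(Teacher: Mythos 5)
Your proposal is correct and follows essentially the same route as the paper, whose proof is a one-line citation of Corollary \ref{cor:Expression-for Z}, Lemmas \ref{lem:Properties of h} and \ref{lem:Special properties of h}, Theorem \ref{thm:uniform approximation of convulotion}, and Remark \ref{rem:eps j is ok}. In fact you supply more detail than the paper does, by explicitly translating the hypotheses (\ref{eq:condition}) into the concrete conditions (\ref{eq:delta conditions}) using $\Sigma_{\delta_N}^{2}\sim 3/(4\delta_N)$, and that bookkeeping checks out.
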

\begin{proof}
This is immediate from Lemma \ref{cor:Expression-for Z}, \ref{lem:Properties of h},
\ref{lem:Special properties of h}, Theorem \ref{thm:uniform approximation of convulotion}
and Remark \ref{rem:eps j is ok}. 
\end{proof}
We're now ready to compute the entropy production. We'll start by
estimating its denominator and numerator.
\begin{lem}
\label{lem:denominator of entropy production}Let $F_{N}\left(v_{1},\dots,v_{N}\right)=\frac{\Pi_{i=1}^{N}f_{\delta_{N}}(v_{i})}{Z_{N}(f,\sqrt{N})}$where
$\delta_{N}$ satisfies conditions (\ref{eq:delta conditions}). Then
\[
\lim_{N\rightarrow\infty}\frac{\int_{\mathbb{S}^{N-1}(\sqrt{N})}F_{N}\log F_{N}d\sigma^{N}}{N}=\frac{\log2}{2}\]
 \end{lem}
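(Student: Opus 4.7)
\emph{Proof plan.} The strategy is to exploit the product structure of $F_N$. Writing $\log F_N=\sum_{i=1}^N\log f_{\delta_N}(v_i)-\log Z_N(f_{\delta_N},\sqrt N)$ and using the permutation symmetry of $F_N$, one obtains
\[
\int_{\mathbb{S}^{N-1}(\sqrt N)}F_N\log F_N\,d\sigma^N=N\int F_N^{(1)}(v)\log f_{\delta_N}(v)\,dv-\log Z_N(f_{\delta_N},\sqrt N),
\]
where $F_N^{(1)}$ is the first marginal of $F_N$. The plan is to show that after dividing by $N$ the first summand tends to $-\tfrac12\log\pi-\tfrac12$ and the second to $-\tfrac12\log(2\pi)-\tfrac12$, so that the difference is $\tfrac12\log 2$.

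For the normalization, I apply Theorem \ref{thm:approximation of Z} with $j=0$, $u=N$ (for which the Gaussian kernel is identically $1$). Taking logarithms, inserting Stirling for $|\mathbb{S}^{N-1}|=2\pi^{N/2}/\Gamma(N/2)$, and noting that by Lemma \ref{lem:Properties of h}(iii) one has $\Sigma_{\delta_N}^2\sim 3/(4\delta_N)$ with $\delta_N$ polynomial in $N$, so that $\log\Sigma_{\delta_N}=O(\log N)$ is negligible after dividing by $N$, the asymptotics collapse to
\[
\frac{1}{N}\log Z_N(f_{\delta_N},\sqrt N)\;\longrightarrow\;-\tfrac12\log(2\pi)-\tfrac12.
\]

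For the marginal, the slicing formula for the uniform measure on $\mathbb{S}^{N-1}(\sqrt N)$ together with the definition of $Z_{N-1}$ yields
\[
F_N^{(1)}(v)=\frac{|\mathbb{S}^{N-2}|(N-v^2)^{(N-3)/2}}{|\mathbb{S}^{N-1}|N^{(N-2)/2}}\,\frac{f_{\delta_N}(v)\,Z_{N-1}(f_{\delta_N},\sqrt{N-v^2})}{Z_N(f_{\delta_N},\sqrt N)}.
\]
Applying Theorem \ref{thm:approximation of Z} to the numerator with $j=1$, $u=N-v^2$ (so the Gaussian argument becomes $(1-v^2)^2$) and to the denominator as before, the sphere areas, the factors $(N-v^2)^{(N-3)/2}$ and the powers of $N$ all cancel identically, leaving
\[
F_N^{(1)}(v)=f_{\delta_N}(v)\,\sqrt{\tfrac{N}{N-1}}\,\frac{e^{-(1-v^2)^2/(2(N-1)\Sigma_{\delta_N}^2)}/\sqrt{2\pi}+\lambda_1(N-1,N-v^2)}{1/\sqrt{2\pi}+\lambda_0(N,N)}.
\]
Remark \ref{rem:eps j is ok} ensures that $\lambda_0,\lambda_1\to 0$ uniformly, and the Gaussian exponent $(1-v^2)^2/(2(N-1)\Sigma_{\delta_N}^2)$ is small on the effective support of $f_{\delta_N}$ because $N\Sigma_{\delta_N}^2\to\infty$ by (\ref{eq:delta conditions}). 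Hence $F_N^{(1)}/f_{\delta_N}\to 1$ in a sense strong enough to conclude that $\int F_N^{(1)}\log f_{\delta_N}\,dv$ and $\int f_{\delta_N}\log f_{\delta_N}\,dv$ have the same limit. To identify that limit, I split $f_{\delta_N}=\delta_N M_{1/(2\delta_N)}+(1-\delta_N)M_{1/(2(1-\delta_N))}$: the wide component contributes $O\bigl(\delta_N\log(1/\delta_N)\bigr)\to 0$, while the narrow component is asymptotically $M_{1/2}(v)=e^{-v^2}/\sqrt\pi$, whose standard differential entropy equals $\tfrac12\log\pi+\tfrac12$. Subtracting the two limits gives the claimed value $\tfrac12\log 2$.

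The delicate point is the uniform control of $F_N^{(1)}/f_{\delta_N}$ against the growth of $|\log f_{\delta_N}(v)|$, which behaves like $\delta_N v^2$ on the fat tails where the wide Gaussian dominates, and the handling of the transition region $v^2\sim\tfrac32\log(1/\delta_N)$ where the two Gaussians balance. Conditions (\ref{eq:delta conditions}) are precisely what guarantee that $\exp\bigl(-(1-v^2)^2/(2(N-1)\Sigma_{\delta_N}^2)\bigr)$ stays uniformly close to $1$ on the effective support of $f_{\delta_N}$, so that the pointwise smallness of $\lambda_0,\lambda_1$ can be leveraged against the integrable tails of $f_{\delta_N}$.
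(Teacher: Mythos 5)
Your proposal is correct and follows essentially the same route as the paper: the symmetry/marginal decomposition, the slicing formula of Lemma \ref{lem:Integration-on-the-Sphere II}, Theorem \ref{thm:approximation of Z} with $j=0$ and $j=1$ together with Stirling's formula for the normalization, and a dominated-convergence argument splitting $f_{\delta_N}$ into its narrow and wide Maxwellian components to identify the limit $\tfrac{1}{2}\log 2$. The ``delicate point'' you flag at the end is exactly what the paper resolves by dominating the integrand with $-\delta M_{\frac{1}{2\delta}}\log(\delta M_{\frac{1}{2\delta}})-(1-\delta)M_{\frac{1}{2(1-\delta)}}\log((1-\delta)M_{\frac{1}{2(1-\delta)}})$ and invoking the generalized dominated convergence theorem.
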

\begin{proof}
Using the symmetry of the problem, Lemma \ref{lem:Integration-on-the-Sphere II}
from the Appendix, Theorem \ref{thm:approximation of Z} and Stirling's
formula we find that\[
\int_{\mathbb{S}^{N-1}(\sqrt{N})}F_{N}\log F_{N}d\sigma^{N}=\frac{1}{Z_{N}(f_{\delta},\sqrt{N})}\cdot\sum_{k=1}^{N}\int_{\mathbb{S}^{N-1}(\sqrt{N})}\left(\Pi_{i=1}^{N}f_{\delta}(v_{i})\right)\log f_{\delta}(v_{k})d\sigma^{N}-\log Z_{N}(f_{\delta},\sqrt{N})\]
 \[
=\frac{N|\mathbb{S}^{N-2}|}{N^{\frac{N-2}{2}}|\mathbb{S}^{N-1}|}\int_{-\sqrt{N}}^{\sqrt{N}}f_{\delta}(v_{1})\log f_{\delta}(v_{1})\left(N-v_{1}^{2}\right)^{\frac{N-3}{2}}\cdot\frac{Z_{N-1}\left(f_{\delta},\sqrt{N-v_{1}^{2}}\right)}{Z_{N}(f_{\delta},\sqrt{N})}dv_{1}-\log Z_{N}(f_{\delta},\sqrt{N})\]
 \[
=\frac{N}{\sqrt{1-\frac{1}{N}}\left(1+\sqrt{2\pi}\lambda_{0}\left(N,N\right)\right)}\int_{\mathbb{R}}f_{\delta}(v_{1})\log f_{\delta}(v_{1})\cdot\chi_{[-\sqrt{N},\sqrt{N}]}(v_{1})\]
\[
\cdot\left(e^{-\frac{\left(1-v_{1}^{2}\right)^{2}}{(N-1)\Sigma_{\delta}^{2}}}+\sqrt{2\pi}\lambda_{1}\left(N-1,N-v_{1}^{2}\right)\right)dv_{1}\]
 \[
-\left(\log\left(\sqrt{2}\left(1+O\left(\frac{1}{\sqrt{N}}\right)\right)\left(1+\sqrt{2\pi}\lambda_{0}(N,N)\right)\right)-\frac{N}{2}\left(\log2\pi+1\right)-\frac{1}{2}\cdot\log\left(\frac{3}{4\delta(1-\delta)}-1\right)\right)\]
 Since $0<f_{\delta}\leq1$ we have that\[
\left|f_{\delta}(v_{1})\log f_{\delta}(v_{1})\cdot\chi_{[-\sqrt{N},\sqrt{N}]}(v_{1})\cdot\left(e^{-\frac{\left(1-v_{1}^{2}\right)^{2}}{(N-1)\Sigma_{\delta}^{2}}}+\sqrt{2\pi}\lambda_{1}\left(N-1,N-v_{1}^{2}\right)\right)\right|\]
 \[
\leq\left(1+\sqrt{2\pi}\epsilon_{1}(N)\right)\left(-f_{\delta}(v_{1})\log f_{\delta}(v_{1})\right)\]
 \[
\leq\left(1+\sqrt{2\pi}\epsilon_{1}(N)\right)\left(-\delta M_{\frac{1}{2\delta}}(v_{1})\log\left(\delta M_{\frac{1}{2\delta}}(v_{1})\right)-(1-\delta)M_{\frac{1}{2(1-\delta)}}(v_{1})\log\left((1-\delta)M_{\frac{1}{2(1-\delta)}}(v_{1})\right)\right)\]
\[
=g_{\delta}(v_{1})\]
 It is easy to check that \[
g_{\delta_{N}}(v)\underset{N\rightarrow0}{\longrightarrow}-M_{\frac{1}{2}}(v)\log M_{\frac{1}{2}}(v)\]
 and \[
\int_{\mathbb{R}}g_{\delta_{N}}(v)dv\underset{N\rightarrow0}{\longrightarrow}-\int_{\mathbb{R}}M_{\frac{1}{2}}(v)\log M_{\frac{1}{2}}(v)dv=\frac{\log\pi}{2}+\frac{1}{2}\]
.

Since\[
f_{\delta_{N}}(v_{1})\log f_{\delta_{N}}(v_{1})\cdot\chi_{[-\sqrt{N},\sqrt{N}]}(v_{1})\cdot\left(e^{-\frac{4\left(1-v_{1}^{2}\right)^{2}\delta_{N}(1-\delta_{N})}{(N-1)\left(3-4\delta_{\nu}(1-\delta_{N})\right)}}+\sqrt{2\pi}\lambda_{1}\left(N-1,N-v_{1}^{2}\right)\right)\]
 \[
\underset{N\rightarrow\infty}{\longrightarrow}M_{\frac{1}{2}}(v_{1})\log M_{\frac{1}{2}}(v_{1})\]
 we conclude that \[
\frac{\int_{\mathbb{S}^{N-1}(\sqrt{N})}F_{N}\log F_{N}d\sigma^{N}}{N}\underset{N\rightarrow\infty}{\longrightarrow}\int_{\mathbb{R}}M_{\frac{1}{2}}(v_{1})\log M_{\frac{1}{2}}(v_{1})dv_{1}+\frac{1}{2}+\frac{\log2\pi}{2}=\frac{\log2}{2}\]
due to the generalized dominated convergence theorem. \end{proof}
\begin{lem}
\label{lem:numerator of the entropy production}Let $F_{N}\left(v_{1},\dots,v_{N}\right)=\frac{\Pi_{i=1}^{N}f_{\delta_{N}}(v_{i})}{Z_{N}(f,\sqrt{N})}$
where $\delta_{N}$ satisfies conditions (\ref{eq:delta conditions}).
Then there exists a constant $C_{type-\delta}$ depending only on
the behavior of $\delta_{N}$ such that \[
\frac{\left\langle \log F_{N},N(I-Q)F_{N}\right\rangle }{N}\leq C_{type-\delta}\left(-\delta_{N}\log\delta_{N}\right)\]
 \end{lem}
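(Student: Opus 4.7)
The plan is to reduce $\langle\log F_N, N(I-Q)F_N\rangle/N$ to an explicit two-body functional of $f_{\delta_N}$ and then estimate it using the mixture structure. First, by the symmetry of $F_N$ in its arguments and the form of $Q$, all $\binom{N}{2}$ pair-terms in $\langle \log F_N, QF_N\rangle$ contribute equally under integration; moreover, $\log F_N(v) - \log F_N(R_{ij}(\vartheta)v)$ depends only on the rotated pair. One therefore gets
\[
\frac{\langle\log F_N, N(I-Q)F_N\rangle}{N} = \frac{1}{2\pi}\int_0^{2\pi}d\vartheta \int F_N(v)\log\frac{f_{\delta_N}(v_1)f_{\delta_N}(v_2)}{f_{\delta_N}(v_1(\vartheta))f_{\delta_N}(v_2(\vartheta))} d\sigma^N.
\]

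Next I would marginalize onto $(v_1,v_2)$ via the two-variable version of Lemma \ref{lem:Integration-on-the-Sphere II}. The resulting weight $A_N(v_1^2+v_2^2)$ is essentially the ratio $|\mathbb{S}^{N-3}|(N-u)^{(N-4)/2}Z_{N-2}(f_{\delta_N},\sqrt{N-u})/[|\mathbb{S}^{N-1}|N^{(N-2)/2}Z_N(f_{\delta_N},\sqrt{N})]$; substituting Theorem \ref{thm:approximation of Z} for both $Z_N$ and $Z_{N-2}$ and using Stirling reduces this to $e^{-(u-2)^2/(2N\Sigma_{\delta_N}^2)}(1+o(1))$, uniformly bounded in $u$ under conditions (\ref{eq:delta conditions}). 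Switching to polar coordinates $v_1=r\cos\phi$, $v_2=r\sin\phi$ turns $R_{12}(\vartheta)$ into a rotation in $\phi$, and averaging over $\vartheta$ collapses the expression to
\[
\int_0^\infty 2\pi r A_N(r^2)\,\overline{(G_r - \bar G_r)(\log G_r - \overline{\log G_r})}\,dr,
\]
with $G_r(\phi)=f_{\delta_N}(r\cos\phi)f_{\delta_N}(r\sin\phi)$ and overlines denoting circular means over $\phi\in[0,2\pi]$.

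The heart of the argument is the angular bracket. Writing $f_{\delta_N}=\delta_N M_{1/(2\delta_N)}+(1-\delta_N)M_{1/(2(1-\delta_N))}$, the product splits as $G_r=D_r+X_r$: the two diagonal Gaussian products making up $D_r$ each depend only on $r^2=v_1^2+v_2^2$ and are thus constant in $\phi$, whereas the two cross products forming $X_r$ carry total weight $2\delta_N(1-\delta_N)$ and are the sole source of angular variation. Hence $G_r-\bar G_r=X_r-\bar X_r$, and
\[
\overline{(G_r-\bar G_r)(\log G_r-\overline{\log G_r})}\leq 2\,\overline{|X_r|}\cdot\sup_\phi|\log G_r-\overline{\log G_r}|.
\]
The first factor is of order $\delta_N$ times a Gaussian decay in $r$, while the second is bounded by $C(1+r^2+|\log\delta_N|)$ by combining the lower bound $f_{\delta_N}(v)\geq(1-\delta_N)M_{1/(2(1-\delta_N))}(v)$ with the observation that $f_{\delta_N}(v)$ cannot be smaller than order $\delta_N^{3/2}$ near the crossover between the two Gaussians. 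The radial integrand is therefore of size $O(-\delta_N\log\delta_N)$ up to Gaussian decay, yielding the claimed bound $C_{type-\delta}(-\delta_N\log\delta_N)$.

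The main obstacle is this final estimate: one must split the $r$-integration into a bulk regime ($r=O(1)$) where the diagonal mixture is essentially $M_{1/2}^{\otimes 2}$ and the angular variance is much smaller than $\delta_N$, and a tail regime where the hot Gaussian drives the logarithm up to size $|\log\delta_N|$ while the mass stays of order $\delta_N$. Verifying that conditions (\ref{eq:delta conditions}) keep $A_N$ well-behaved across both regimes and tracking how each error depends on the precise decay rate of $\delta_N$ is what produces the $\delta_N$-dependent constant $C_{type-\delta}$.
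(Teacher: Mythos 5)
Your overall strategy is the paper's: reduce to a single-pair functional by symmetry, control the spherical weight $Z_{N-2}/Z_{N}$ via Theorem \ref{thm:approximation of Z} and Stirling, observe that the two ``diagonal'' Gaussian products $M_{a}(v_{1})M_{a}(v_{2})$ are rotation invariant so that only the cross terms, of total mass $2\delta_{N}(1-\delta_{N})$, survive in the angular variation, and then pay a logarithm for the smallness of $f_{\delta_{N}}$. The polar-coordinate/circular-mean packaging is only cosmetically different from the paper's symmetrize-then-unsymmetrize manipulation, which ends with $4\left(-\log f_{\delta}(v_{1})\right)\left(f_{\delta}(v_{1}(\vartheta))f_{\delta}(v_{2}(\vartheta))-f_{\delta}(v_{1})f_{\delta}(v_{2})\right)$.

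There is, however, a genuine quantitative gap at the final step. You bound $\sup_{\phi}\left|\log G_{r}-\overline{\log G_{r}}\right|$ by $C\left(1+r^{2}+|\log\delta_{N}|\right)$, i.e.\ with an $O(1)$ coefficient on $r^{2}$ (which is all the cold-Gaussian lower bound $f_{\delta_{N}}\geq(1-\delta_{N})M_{1/(2(1-\delta_{N}))}$ can give), and then assert the radial integrand is $O(-\delta_{N}\log\delta_{N})$ ``up to Gaussian decay.'' But $\overline{|X_{r}|}$ is carried by the hot Gaussian $M_{1/(2\delta_{N})}$, whose variance is $1/(2\delta_{N})$; its decay in $r$ is only at rate $e^{-\delta_{N}r^{2}}$ and its radial second moment is
\[
\int_{0}^{\infty}2\pi r^{3}\,\overline{|X_{r}|}\,dr=2\delta_{N}(1-\delta_{N})\left(\frac{1}{2\delta_{N}}+\frac{1}{2(1-\delta_{N})}\right)=1,
\]
so the $r^{2}$ term contributes $O(1)$, not $O(\delta_{N}|\log\delta_{N}|)$, and the lemma does not follow as written. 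The repair is exactly what the paper does (and what your remark that $f_{\delta_{N}}\gtrsim\delta_{N}^{3/2}$ gestures at): use the hot-Gaussian lower bound $f_{\delta_{N}}(v)\geq\delta_{N}M_{1/(2\delta_{N})}(v)$ for \emph{all} $v$, which gives $-\log f_{\delta_{N}}(v)\leq\frac{3}{2}|\log\delta_{N}|+\frac{\log\pi}{2}+\delta_{N}v^{2}$, with the crucial coefficient $\delta_{N}$ in front of $v^{2}$. Paired with the cross terms this yields $\delta_{N}(1-\delta_{N})\left(3|\log\delta_{N}|+O(1)\right)=O(-\delta_{N}\log\delta_{N})$, after which your argument closes and coincides with the paper's.
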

\begin{proof}
Similar to Lemma \ref{lem:denominator of entropy production} by using
the symmetry of the problem, Lemma \ref{lem:Integration-on-the-Sphere II}
from the Appendix, Theorem \ref{thm:approximation of Z} and Stirling's
formula we find that \[
\left\langle \log F_{N},N(I-Q)F_{N}\right\rangle \]
 \[
=\frac{1}{Z_{N}(f_{\delta},\sqrt{N})(N-1)\pi}\sum_{k=1}^{N}\int_{\mathbb{S}^{N-1}(\sqrt{N})}\log f_{\delta}(v_{k})\]
 \[
\cdot\left(\sum_{i<j}\int_{0}^{2\pi}\left(f^{\otimes N}\left(v_{1},\dots,v_{N}\right)-f^{\otimes N}\left(R_{i.j}(\vartheta)\left(v_{1},\dots,v_{N}\right)\right)\right)d\vartheta\right)d\sigma^{N}\]
 if $i$ and $j$ are different than $k$ the integral is zero and
so\[
\left\langle \log F_{N},N(I-Q)F_{N}\right\rangle =\frac{1}{Z_{N}(f_{\delta},\sqrt{N})(N-1)\pi}\sum_{k=1}^{N}\sum_{j\not=k}\int_{\mathbb{S}^{N-1}(\sqrt{N})}\log f_{\delta}(v_{k})\]
 \[
\cdot\left(\int_{0}^{2\pi}\left(f^{\otimes N}\left(v_{1},\dots,v_{N}\right)-f^{\otimes N}\left(R_{k.j}(\vartheta)\left(v_{1},\dots,v_{N}\right)\right)\right)d\vartheta\right)d\sigma^{N}\]
 \[
=\frac{N}{Z_{N}(f_{\delta},\sqrt{N})\pi}\int_{0}^{2\pi}d\vartheta\int_{\mathbb{S}^{N-1}(\sqrt{N})}\left(-\log f_{\delta}(v_{1})\right)\left(f_{\delta}(v_{1}(\vartheta))f_{\delta}(v_{2}(\vartheta))-f_{\delta}(v_{1})f_{\delta}(v_{2})\right)\left(\Pi_{i=3}^{N}f_{\delta}(v_{i})\right)d\sigma^{N}\]
 \[
=\frac{N|\mathbb{S}^{N-3}|}{|\mathbb{S}^{N-1}|N^{\frac{N-2}{2}}\pi}\int_{0}^{2\pi}d\vartheta\int_{v_{1}^{2}+v_{2}^{2}\leq N}\left(-\log f_{\delta}(v_{1})\right)\left(f_{\delta}(v_{1}(\vartheta))f_{\delta}(v_{2}(\vartheta))-f_{\delta}(v_{1})f_{\delta}(v_{2})\right)\]
 \[
\cdot\left(N-v_{1}^{2}-v_{2}^{2}\right)^{\frac{N-4}{2}}\frac{Z_{N-2}\left(f_{\delta}.\sqrt{N-v_{1}^{2}-v_{2}^{2}}\right)}{Z_{N}(f_{\delta},\sqrt{N})}dv_{1}dv_{2}\]
 \[
=\frac{N}{\pi\sqrt{1-\frac{2}{N}}}\int_{0}^{2\pi}d\vartheta\int_{v_{1}^{2}+v_{2}^{2}\leq N}\left(-\log f_{\delta}(v_{1})\right)\left(f_{\delta}(v_{1}(\vartheta))f_{\delta}(v_{2}(\vartheta))-f_{\delta}(v_{1})f_{\delta}(v_{2})\right)\]
 \[
\cdot\frac{e^{-\frac{\left(2-v_{1}^{2}-v_{2}^{2}\right)}{(N-2)\Sigma_{\delta}^{2}}}+\sqrt{2\pi}\lambda_{2}\left(N-2.N-v_{1}^{2}-v_{2}^{2}\right)}{1+\sqrt{2\pi}\lambda_{0}(N,N)}dv_{1}dv_{2}\]
 Using rotational symmetry and symmetry in the variables we find that
\[
\left\langle \log F_{N},N(I-Q)F_{N}\right\rangle \]
 \[
=\frac{N}{4\pi\sqrt{1-\frac{2}{N}}}\int_{0}^{2\pi}d\vartheta\int_{v_{1}^{2}+v_{2}^{2}\leq N}\left(\log f_{\delta}(v_{1}(\vartheta))f_{\delta}(v_{2}(\vartheta))-\log f_{\delta}(v_{1})f_{\delta}(v_{2})\right)\]
 \[
\left(f_{\delta}(v_{1}(\vartheta))f_{\delta}(v_{2}(\vartheta))-f_{\delta}(v_{1})f_{\delta}(v_{2})\right)\cdot\frac{e^{-\frac{\left(2-v_{1}^{2}-v_{2}^{2}\right)}{(N-2)\Sigma_{\delta}^{2}}}+\sqrt{2\pi}\lambda_{2}\left(N-2.N-v_{1}^{2}-v_{2}^{2}\right)}{1+\sqrt{2\pi}\lambda_{0}(N,N)}dv_{1}dv_{2}\]
 \[
\leq\frac{N}{4\pi\sqrt{1-\frac{2}{N}}}\int_{0}^{2\pi}d\vartheta\int_{\mathbb{R}^{2}}\left(\log f_{\delta}(v_{1}(\vartheta))f_{\delta}(v_{1}(\vartheta))-\log f_{\delta}(v_{1})f_{\delta}(v_{1})\right)\]
 \[
\cdot\left(f_{\delta}(v_{1}(\vartheta))f_{\delta}(v_{2}(\vartheta))-f_{\delta}(v_{1})f_{\delta}(v_{2})\right)\cdot\frac{1+\sqrt{2\pi}\epsilon_{2}(N)}{1+\sqrt{2\pi}\lambda_{0}(N,N)}dv_{1}dv_{2}\]
 \[
=\frac{N\left(1+\sqrt{2\pi}\epsilon_{2}(N)\right)}{\pi\sqrt{1-\frac{2}{N}}\left(1+\sqrt{2\pi}\lambda_{0}(N,N)\right)}\int_{0}^{2\pi}d\vartheta\int_{\mathbb{R}^{2}}\left(-\log f_{\delta}(v_{1})\right)\left(f_{\delta}(v_{1}(\vartheta))f_{\delta}(v_{2}(\vartheta))-f_{\delta}(v_{1})f_{\delta}(v_{2})\right)dv_{1}dv_{2}\]
 Since $M_{a}(v_{1}(\vartheta))M_{a}(v_{2}(\vartheta))=M_{a}(v_{1})M_{a}(v_{2})$
we see that\[
f_{\delta}(v_{1}(\vartheta))f_{\delta}(v_{2}(\vartheta))-f_{\delta}(v_{1})f_{\delta}(v_{2})=\delta(1-\delta)\left(M_{\frac{1}{2\delta}}(v_{1}(\vartheta))M_{\frac{1}{2(1-\delta)}}(v_{2}(\vartheta))-M_{\frac{1}{2\delta}}(v_{1})M_{\frac{1}{2(1-\delta)}}(v_{2})\right)\]
 \[
+\delta(1-\delta)\left(M_{\frac{1}{2\delta}}(v_{2}(\vartheta))M_{\frac{1}{2(1-\delta)}}(v_{1}(\vartheta))-M_{\frac{1}{2\delta}}(v_{2})M_{\frac{1}{2(1-\delta)}}(v_{1})\right)\]
 \[
\leq\delta(1-\delta)\left(M_{\frac{1}{2\delta}}(v_{1}(\vartheta))M_{\frac{1}{2(1-\delta)}}(v_{2}(\vartheta))+M_{\frac{1}{2\delta}}(v_{2}(\vartheta))M_{\frac{1}{2(1-\delta)}}(v_{1}(\vartheta))\right)\]
 and along with \[
-\log f_{\delta}(v_{1})\leq-\log\left(\delta M_{\frac{1}{2\delta}}(v_{1})\right)\leq-\frac{3\log\delta}{2}+\frac{\log\pi}{2}+\delta\left(v_{1}^{2}(\vartheta)+v_{2}^{2}(\vartheta)\right)\]
 we conclude that \[
\frac{\left\langle \log F_{N},N(I-Q)F_{N}\right\rangle }{N}\]
 \[
\leq\frac{4\left(1+\sqrt{2\pi}\epsilon_{2}(N)\right)\delta(1-\delta)}{\sqrt{1-\frac{2}{N}}\left(1+\sqrt{2\pi}\lambda_{0}(N,N)\right)}\int_{\mathbb{R}^{2}}\left(-\frac{3\log\delta}{2}+\frac{\log\pi}{2}+\delta\left(v_{1}^{2}+v_{2}^{2}\right)\right)M_{\frac{1}{2\delta}}(v_{1})M_{\frac{1}{2(1-\delta)}}(v_{2})dv_{1}dv_{2}\]
 \[
\leq\frac{4\left(1+\sqrt{2\pi}\epsilon_{2}(N)\right)}{\sqrt{1-\frac{2}{N}}\left(1+\sqrt{2\pi}\lambda_{0}(N,N)\right)}\left(\frac{3}{2}-\frac{\log\pi}{2\log\delta}-\frac{1}{2\log\delta}-\frac{\delta}{2\log\delta}\right)\left(-\delta\log\delta\right)\]
 The result follows. \end{proof}
\begin{thm}
\label{thm:big result general}Let $F_{N}\left(v_{1},\dots,v_{N}\right)=\frac{\Pi_{i=1}^{N}f_{\delta_{N}}(v_{i})}{Z_{N}(f,\sqrt{N})}$
where $\delta_{N}$ satisfies conditions (\ref{eq:delta conditions}).
Then there exists a constant $C_{type-\delta}$ and an integer $N_{type-\delta}$
depending only on the behavior of $\delta_{N}$ such that for every
$N>N_{type-\delta}$ \[
\frac{\left\langle \log F_{N},N(I-Q)F_{N}\right\rangle }{\int_{\mathbb{S}^{N-1}(\sqrt{N})}F_{N}\log F_{n}d\sigma^{N}}\leq C_{type-\delta}\left(-\delta_{N}\log\delta_{N}\right)\]
 \end{thm}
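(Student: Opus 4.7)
The plan is to recognize that Theorem \ref{thm:big result general} is essentially the ratio of the two preceding lemmas, so the proof reduces to combining them and absorbing constants. Lemma \ref{lem:numerator of the entropy production} already produces the bound
\[
\left\langle \log F_{N}, N(I-Q)F_{N}\right\rangle \;\leq\; N\cdot C_{type\text{-}\delta}\bigl(-\delta_{N}\log\delta_{N}\bigr),
\]
while Lemma \ref{lem:denominator of entropy production} gives the asymptotic equivalence
\[
\int_{\mathbb{S}^{N-1}(\sqrt{N})}F_{N}\log F_{N}\,d\sigma^{N}\;\sim\;\frac{\log 2}{2}\cdot N \qquad \text{as }N\to\infty.
\]
So the heart of the argument is simply dividing these two estimates.

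First, I would invoke Lemma \ref{lem:denominator of entropy production} to choose a threshold $N_{type\text{-}\delta}$, depending only on how $\delta_N$ decays, such that for all $N>N_{type\text{-}\delta}$ one has the one-sided estimate
\[
\int_{\mathbb{S}^{N-1}(\sqrt{N})}F_{N}\log F_{N}\,d\sigma^{N}\;\geq\;\frac{\log 2}{4}\cdot N,
\]
say (any fixed positive fraction of $\tfrac{\log 2}{2}$ works). Note that the convergence in Lemma \ref{lem:denominator of entropy production} only depends on $\delta_N$ through conditions (\ref{eq:delta conditions}), so $N_{type\text{-}\delta}$ can indeed be chosen to depend only on the behavior of $\delta_N$, as required.

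Second, I would apply Lemma \ref{lem:numerator of the entropy production} to bound the numerator. Dividing the numerator bound by the denominator bound gives, for $N>N_{type\text{-}\delta}$,
\[
\frac{\left\langle \log F_{N}, N(I-Q)F_{N}\right\rangle}{\int_{\mathbb{S}^{N-1}(\sqrt{N})}F_{N}\log F_{N}\,d\sigma^{N}}\;\leq\;\frac{N\cdot C_{type\text{-}\delta}\bigl(-\delta_{N}\log\delta_{N}\bigr)}{(N\log 2)/4}\;=\;\widetilde{C}_{type\text{-}\delta}\bigl(-\delta_{N}\log\delta_{N}\bigr),
\]
where $\widetilde{C}_{type\text{-}\delta}=4C_{type\text{-}\delta}/\log 2$ still depends only on $\delta_N$. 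Relabeling this absorbed constant as $C_{type\text{-}\delta}$ yields the statement of the theorem.

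There is no real obstacle here beyond ensuring that both lemmas hold under the same assumption (\ref{eq:delta conditions}) on $\delta_N$ and that the threshold $N_{type\text{-}\delta}$ can be extracted from the convergence statement of Lemma \ref{lem:denominator of entropy production}; both are immediate from the formulations given. The substantive work has already been carried out in the two preceding lemmas, where the asymptotic expansion of $Z_{N-j}(f_{\delta_N},\sqrt{u})$ from Theorem \ref{thm:approximation of Z} was combined with the integration-on-the-sphere formula and the specific algebraic structure of $f_{\delta_N}$.
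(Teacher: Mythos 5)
Your proposal is correct and matches the paper's argument: the paper simply states that the theorem follows immediately from Lemma \ref{lem:denominator of entropy production} and Lemma \ref{lem:numerator of the entropy production}, which is exactly the division of the numerator bound by a lower bound on the denominator (valid for large $N$ since the denominator converges to $\frac{\log 2}{2}N>0$) that you carry out. Your write-up just makes explicit the choice of threshold $N_{type\text{-}\delta}$ and the absorption of constants that the paper leaves implicit.
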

\begin{proof}
This follows immediately from Lemma \ref{lem:denominator of entropy production}
and \ref{lem:numerator of the entropy production}.\end{proof}
\begin{thm}
\label{thm:big result beta}Let $F_{N}\left(v_{1},\dots,v_{N}\right)=\frac{\Pi_{i=1}^{N}f_{\delta_{N}}(v_{i})}{Z_{N}(f,\sqrt{N})}$
where $\delta_{N}=\frac{1}{N^{1-2\beta}}$ and $0<\beta<\frac{1}{6}$.
Then there exists a constant $C_{\beta}$ and an integer $N_{\beta}$
depending only on $\beta$ such that for every $N>N_{\beta}$ \[
\frac{\left\langle \log F_{N},N(I-Q)F_{N}\right\rangle }{\int_{\mathbb{S}^{N-1}(\sqrt{N})}F_{N}\log F_{n}d\sigma^{N}}\leq\frac{C_{\beta}\log N}{N^{1-2\beta}}\]
 \end{thm}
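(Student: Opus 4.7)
The plan is to specialize Theorem \ref{thm:big result general} to the concrete choice $\delta_N = N^{-(1-2\beta)}$, so essentially all the work reduces to checking that this choice of $\delta_N$ meets the hypotheses (\ref{eq:delta conditions}) and then to a one-line computation of the quantity $-\delta_N\log\delta_N$ that appears on the right side of the general bound. Since Theorem \ref{thm:big result general} already asserts
\[
\frac{\langle \log F_N,N(I-Q)F_N\rangle}{\int_{\mathbb{S}^{N-1}(\sqrt N)}F_N\log F_N\,d\sigma^N}\;\leq\; C_{\text{type-}\delta}\,(-\delta_N\log\delta_N)
\]
as soon as the three conditions of (\ref{eq:delta conditions}) are in force, the only thing left is an elementary verification.

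First I would check that $\delta_N=N^{-(1-2\beta)}$ is dominated by powers of $N$, which is immediate since it is itself a power of $N$. Next I would compute
\[
\delta_N^{1+2\beta}\,N \;=\; N^{1-(1-2\beta)(1+2\beta)}\;=\;N^{4\beta^2},
\]
so this diverges as $N\to\infty$ for every $\beta>0$. Then
\[
\delta_N^{1+3\beta}\,N \;=\; N^{1-(1-2\beta)(1+3\beta)}\;=\;N^{-\beta+6\beta^2}\;=\;N^{\beta(6\beta-1)},
\]
which tends to $0$ precisely when $6\beta-1<0$, i.e.\ when $\beta<\tfrac16$. This is exactly where the upper bound $\tfrac16$ on $\beta$ enters: it is forced by the requirement that the analytic-domain estimate in Lemma \ref{lem:Analytic Domain} (propagated through Theorem \ref{thm:uniform approximation of convulotion}) still controls the convolution as $N\to\infty$.

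Once the hypotheses of Theorem \ref{thm:big result general} are verified, I would apply it to obtain a constant $C_{\text{type-}\delta}$ and an $N_{\text{type-}\delta}$ (both depending only on the sequence $\delta_N$, hence only on $\beta$) such that the ratio is bounded by $C_{\text{type-}\delta}(-\delta_N\log\delta_N)$ for $N>N_{\text{type-}\delta}$. The final step is the direct calculation
\[
-\delta_N\log\delta_N \;=\; -N^{-(1-2\beta)}\log\!\bigl(N^{-(1-2\beta)}\bigr) \;=\; (1-2\beta)\,\frac{\log N}{N^{1-2\beta}},
\]
so setting $C_\beta := (1-2\beta)\,C_{\text{type-}\delta}$ and $N_\beta := N_{\text{type-}\delta}$ yields the stated inequality for all $N>N_\beta$.

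There is no serious obstacle here: the substantive analytic and combinatorial work has already been done in Sections \ref{sec:About-the-Function Z} and \ref{sec:Centrel-Limit-Theorem} and in Lemmas \ref{lem:denominator of entropy production}--\ref{lem:numerator of the entropy production}, which are packaged in Theorem \ref{thm:big result general}. The only conceptual point is identifying the sharp range of admissible $\beta$, and the calculation above shows that the binding constraint is the condition $\delta_N^{1+3\beta}N\to 0$, which forces $\beta<\tfrac16$. The choice $\delta_N=N^{-(1-2\beta)}$ is in fact tuned to saturate (up to the logarithmic factor) the bound $-\delta_N\log\delta_N$ subject to this constraint, which is why the resulting estimate takes the form $C_\beta(\log N)/N^{1-2\beta}$.
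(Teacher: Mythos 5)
Your proposal is correct and follows exactly the paper's route: the paper also deduces this theorem immediately from Theorem \ref{thm:big result general} by checking that $\delta_N=N^{-(1-2\beta)}$ satisfies conditions (\ref{eq:delta conditions}) and evaluating $-\delta_N\log\delta_N$. Your verification of the exponents ($N^{4\beta^2}\to\infty$ and $N^{\beta(6\beta-1)}\to 0$ for $\beta<\tfrac16$) is the detail the paper leaves implicit, and it is carried out correctly.
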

\begin{proof}
This follows immediately from Theorem \ref{thm:big result general}
and the fact that $\delta_{N}=\frac{1}{N^{1-2\beta}}$ satisfies conditions
(\ref{eq:delta conditions}).

From this we conclude our main result:\end{proof}
\begin{thm}
\label{thm:tnropy production result}For any $0<\beta<\frac{1}{6}$
there exists a constant $C_{\beta}$ depending only on $\beta$ such
that\[
\Gamma_{N}\leq\frac{C_{\beta}\log N}{N^{1-2\beta}}\]

\end{thm}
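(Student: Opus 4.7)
The overall plan is to exploit the variational definition of $\Gamma_N$: since it is defined as the infimum of the ratio $\langle \log\psi_N, N(I-Q)\psi_N\rangle / H_N(\psi_N)$ over symmetric probability densities on $\mathbb{S}^{N-1}(\sqrt{N})$, any single such density produces an upper bound. The preceding Theorem \ref{thm:big result beta} has effectively already supplied the desired bound for a particular one-parameter family of test functions, so the task here is simply to assemble the pieces.

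I would take the same test function already used in the paper, namely $F_N(v_1,\dots,v_N)=\prod_{i=1}^{N}f_{\delta_N}(v_i)/Z_N(f_{\delta_N},\sqrt{N})$ with $\delta_N=N^{-(1-2\beta)}$ and $f_{\delta}(v)=\delta M_{1/(2\delta)}(v)+(1-\delta)M_{1/(2(1-\delta))}(v)$. Since $F_N$ is manifestly symmetric in its variables and the factor $Z_N(f_{\delta_N},\sqrt{N})$ is precisely the normalization on the energy sphere, $F_N$ is a valid competitor in the infimum defining $\Gamma_N$.

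Next, I would check that $\delta_N=N^{-(1-2\beta)}$ satisfies conditions (\ref{eq:delta conditions}) whenever $0<\beta<\tfrac{1}{6}$. The two binding constraints are $\delta_N^{1+2\beta}N\to\infty$ and $\delta_N^{1+3\beta}N\to 0$; the first reduces to $N^{4\beta^2}\to\infty$ (fine for any $\beta>0$), while the second reduces to $(1-2\beta)(1+3\beta)>1$, equivalently $\beta(1-6\beta)>0$. This is exactly where the restriction $\beta<\tfrac{1}{6}$ comes from. Having verified the hypotheses, Theorem \ref{thm:big result beta} immediately produces constants $C_\beta$ and $N_\beta$ depending only on $\beta$ such that the entropy production ratio evaluated at $F_N$ is bounded by $C_\beta\log N/N^{1-2\beta}$ for all $N>N_\beta$; enlarging $C_\beta$ to absorb the finitely many remaining values of $N$ then yields the stated inequality for all $N$.

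Honestly, there is no real obstacle in this final step. All the heavy lifting---the central-limit style uniform control of $Z_{N-j}(f_{\delta_N},\sqrt{u})$ via Theorem \ref{thm:approximation of Z}, together with the careful numerator and denominator estimates in Lemmas \ref{lem:denominator of entropy production} and \ref{lem:numerator of the entropy production}---has already been carried out. The present theorem is simply the translation of Theorem \ref{thm:big result beta} into a bound on $\Gamma_N$ itself, made possible by the fact that $\Gamma_N$ is an infimum and $F_N$ is a legitimate test density.
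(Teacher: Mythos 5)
Your proposal is correct and follows essentially the same route as the paper: it plugs the test density $F_N$ with $\delta_N=N^{-(1-2\beta)}$ into the variational definition of $\Gamma_N$ and invokes Theorem \ref{thm:big result beta}, with the condition check $(1-2\beta)(1+3\beta)>1\iff\beta<\tfrac16$ matching the paper's reasoning. The only (harmless) addition is your remark about enlarging $C_\beta$ to cover $N\le N_\beta$.
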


\section{Final Remarks\label{sec:Final-Remarks}}

One question we might ask ourselves is: Can we modify the given proof
to get the exact value in Villani's conjecture? Looking at the proof
we notice that the result we obtained has very tight conditions in
terms of $\beta$. We needed $\delta_{N}^{1+2\beta}N$ to diverge
to infinity \emph{and} $\delta_{N}^{1+3\beta}N$ to go to zero. This
doesn't leave much room for variations. This leads us to believe that
the family of functions constructed here would not be helpful to prove
the exact version of Villani's conjecture. Something more clever must
be done.

Another question we don't know the answer to is the fourth moment
question. Both in this paper and in (\cite{key-8}) the family of
functions constructed has an unbounded fourth moment. Would restricting
the fourth moment lead to a lower bound on the entropy production?

Lastly, can our computation be generalized to a more difficult interaction
than Kac's model? Can we try and use the same idea in a different
models of the Boltzmann equation?

While we don't know the answers to the proposed questions we hope
that this paper shed some light on the entropy production problem
and that at least some of the above questions would seem more solvable
after reading it.

\appendix

\section{Helpful Computations}

The appendix consists of Lemmas that are vital for the computations
needed in our paper, and are used extensively in Sections \ref{sec:Centrel-Limit-Theorem}
and \ref{sec:Entropy-Production}.
\begin{lem}
\label{lem:Gaussian-integral-estimation}(Gaussian Integral Estimation)\[
\frac{\sqrt{2\pi}}{a}\cdot\sqrt{1-e^{-\frac{a\eta^{2}}{2}}}\leq\int_{|x|<\eta}e^{-\frac{a^{2}x^{2}}{2}}dx\leq\frac{\sqrt{2\pi}}{a}\cdot\sqrt{1-e^{-a^{2}\eta^{2}}}\]
 \[
\int_{|x|>\eta}e^{-\frac{a^{2}x^{2}}{2}}dx\leq\frac{\sqrt{2\pi}\cdot e^{-\frac{a^{2}\eta^{2}}{2}}}{a}\]
 \end{lem}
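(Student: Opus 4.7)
The plan is to prove all three inequalities by the classical ``square versus disk'' comparison that underlies the evaluation of $\int_{\mathbb{R}}e^{-x^2/2}\,dx$. Set
$$I(\eta)=\int_{|x|<\eta}e^{-a^{2}x^{2}/2}\,dx, \qquad J(\eta)=\int_{|x|>\eta}e^{-a^{2}x^{2}/2}\,dx,$$
so that $I(\eta)+J(\eta)=\sqrt{2\pi}/a$. The obvious difficulty is that $I(\eta)$ has no closed form, so I would not attack it directly; instead I would square it and work in the plane, where the Gaussian is radial and polar coordinates give exact formulas.

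First I would write, using Fubini,
$$I(\eta)^{2}=\iint_{[-\eta,\eta]^{2}}e^{-a^{2}(x^{2}+y^{2})/2}\,dx\,dy.$$
The square $[-\eta,\eta]^{2}$ contains the disk of radius $\eta$ and is contained in the disk of radius $\eta\sqrt{2}$, so since the integrand is positive and radial,
$$\iint_{|(x,y)|<\eta}e^{-a^{2}(x^{2}+y^{2})/2}\,dx\,dy \;\le\; I(\eta)^{2}\;\le\;\iint_{|(x,y)|<\eta\sqrt{2}}e^{-a^{2}(x^{2}+y^{2})/2}\,dx\,dy.$$
Each bracket is computed in polar coordinates by an elementary one-variable integration, yielding
$$\frac{2\pi}{a^{2}}\bigl(1-e^{-a^{2}\eta^{2}/2}\bigr)\;\le\;I(\eta)^{2}\;\le\;\frac{2\pi}{a^{2}}\bigl(1-e^{-a^{2}\eta^{2}}\bigr).$$
Taking square roots yields the two-sided inequality on $I(\eta)$ stated in the lemma (with the exponent $a^{2}\eta^{2}/2$ that the paper's display appears to intend under $\sqrt{\,\cdot\,}$ on the left).

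For the tail estimate I would use the identity $J(\eta)=\sqrt{2\pi}/a-I(\eta)$ together with the lower bound just obtained:
$$J(\eta)\;\le\;\frac{\sqrt{2\pi}}{a}\Bigl(1-\sqrt{1-e^{-a^{2}\eta^{2}/2}}\Bigr).$$
The remaining step is the elementary inequality $1-\sqrt{1-t}\le t$ for $t\in[0,1]$, which follows from $\sqrt{1-t}\ge 1-t$ (square both sides and use $t\ge 0$). Applying it with $t=e^{-a^{2}\eta^{2}/2}$ produces the claimed bound $J(\eta)\le(\sqrt{2\pi}/a)\,e^{-a^{2}\eta^{2}/2}$.

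There is no real obstacle: the only point that requires care is verifying the scalar inequality $1-\sqrt{1-t}\le t$, and choosing the correct inscribed and circumscribed radii so that the exponents in the polar computation match those written in the lemma.
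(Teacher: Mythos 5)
Your proof is correct and follows essentially the same route as the paper: squaring the one-dimensional integral, comparing the square $[-\eta,\eta]^2$ with the inscribed and circumscribed disks, evaluating in polar coordinates, and deducing the tail bound from $J(\eta)=\sqrt{2\pi}/a-I(\eta)$ together with $1-\sqrt{1-t}\le t$ (the paper gets this last step by rationalizing, you by $\sqrt{1-t}\ge 1-t$, which is the same elementary fact). You also correctly identified that the exponent $a\eta^{2}/2$ on the left-hand side of the stated lemma is a typo for $a^{2}\eta^{2}/2$.
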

\begin{proof}
We have\[
\int_{|x|<\eta}e^{-\frac{a^{2}x^{2}}{2}}dx=\sqrt{\int\int_{|x|,|y|<\eta}e^{-\frac{a^{2}\left(x^{2}+y^{2}\right)}{2}}dxdy}\leq\sqrt{\int\int_{x^{2}+y^{2}<2\eta^{2}}e^{-\frac{a^{2}\left(x^{2}+y^{2}\right)}{2}}dxdy}\]
 \[
=\sqrt{\int_{0}^{2\pi}\int_{0}^{\sqrt{2}\eta}re^{-\frac{a^{2}r^{2}}{2}}drd\vartheta}=\sqrt{2\pi}\cdot\sqrt{\frac{1-e^{-a^{2}\eta^{2}}}{a^{2}}}\]
 And \[
\int_{|x|<\eta}e^{-\frac{a^{2}x^{2}}{2}}dx\geq\sqrt{\int\int_{x^{2}+y^{2}<\eta^{2}}e^{-\frac{a^{2}\left(x^{2}+y^{2}\right)}{2}}dxdy}=\sqrt{2\pi}\cdot\sqrt{\frac{1-e^{-\frac{a\eta^{2}}{2}}}{a^{2}}}\]
 Similarly\[
\int_{|x|>\eta}e^{-\frac{a^{2}x^{2}}{2}}dx=\int_{\mathbb{R}}e^{-\frac{a^{2}x^{2}}{2}}dx-\int_{|x|<\eta}e^{-\frac{a^{2}x^{2}}{2}}dx=\frac{\sqrt{2\pi}}{a}-\int_{|x|<\eta}e^{-\frac{a^{2}x^{2}}{2}}dx\]
 \[
\leq\frac{\sqrt{2\pi}}{a}\left(1-\sqrt{1-e^{-\frac{a^{2}\eta^{2}}{2}}}\right)=\frac{\sqrt{2\pi}\cdot e^{-\frac{a^{2}\eta^{2}}{2}}}{a\left(1+\sqrt{1-e^{-\frac{a^{2}\eta^{2}}{2}}}\right)}\leq\frac{\sqrt{2\pi}\cdot e^{-\frac{a^{2}\eta^{2}}{2}}}{a}\]
 \end{proof}
\begin{lem}
\label{lem:Special-Sums-Evaluation}(Special Sums Evaluation)\[
\sum_{k=k_{0}+1}^{m}\frac{e^{-\frac{a^{2}k}{2}}}{\sqrt{k}}\leq\frac{\sqrt{2\pi}\cdot e^{-\frac{a^{2}k_{0}}{2}}}{a}\]
 \[
\sum_{k=k_{0}+1}^{m}\frac{1}{\sqrt{k}}\leq2\sqrt{m}\]
 \end{lem}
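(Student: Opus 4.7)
For both inequalities my plan is to exploit monotonicity of the summand and pass from the discrete sum to a comparison integral, then finish via elementary calculus.

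For the first inequality, I would note that $f(k)=e^{-a^{2}k/2}/\sqrt{k}$ is strictly decreasing on $k\geq 1$, so the usual integral test gives
\[
\sum_{k=k_{0}+1}^{m}\frac{e^{-a^{2}k/2}}{\sqrt{k}}\leq\int_{k_{0}}^{\infty}\frac{e^{-a^{2}t/2}}{\sqrt{t}}\,dt.
\]
I would then perform the substitution $t=u^{2}$, $dt=2u\,du$, which turns the integral into a purely Gaussian one:
\[
\int_{k_{0}}^{\infty}\frac{e^{-a^{2}t/2}}{\sqrt{t}}\,dt = 2\int_{\sqrt{k_{0}}}^{\infty} e^{-a^{2}u^{2}/2}\,du.
\]
The last quantity is exactly the one-sided tail estimated in Lemma \ref{lem:Gaussian-integral-estimation} (half of the two-sided bound with $\eta=\sqrt{k_{0}}$), yielding the upper bound $\sqrt{2\pi}\,e^{-a^{2}k_{0}/2}/a$, which is the target.

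For the second inequality, the same monotonicity argument applied to $1/\sqrt{k}$ gives
\[
\sum_{k=k_{0}+1}^{m}\frac{1}{\sqrt{k}}\leq\int_{k_{0}}^{m}\frac{dt}{\sqrt{t}}=2\sqrt{m}-2\sqrt{k_{0}}\leq 2\sqrt{m}.
\]
Neither step presents a genuine obstacle; the only small subtlety is the direction of the integral-test comparison, for which I would be careful to write $f(k)\leq\int_{k-1}^{k}f(t)\,dt$ so that the sum from $k=k_{0}+1$ to $m$ is dominated by $\int_{k_{0}}^{m}f(t)\,dt$, and then extend to $\infty$ in the first case since the integrand is positive. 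With those two observations in place the lemma follows immediately.
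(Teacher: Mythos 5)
Your proof is correct and follows essentially the same route as the paper: bound the sum by the integral $\int_{k_{0}}^{m}$ via monotonicity, change variables to reduce to a Gaussian tail, and invoke Lemma \ref{lem:Gaussian-integral-estimation} (the paper uses $y=a\sqrt{x}$ where you use $t=u^{2}$, which is the same computation). The second inequality is handled identically in both arguments.
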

\begin{proof}
We have that\[
\sum_{k=k_{0}+1}^{m}\frac{e^{-\frac{a^{2}k}{2}}}{\sqrt{k}}\leq\int_{k_{0}}^{m}\frac{e^{-\frac{a^{2}x}{2}}}{\sqrt{x}}dx\underset{y=a\sqrt{x}}{=}\frac{2}{a}\int_{a\sqrt{k_{0}}}^{a\sqrt{m}}e^{-\frac{y^{2}}{2}}dy\leq\frac{2}{a}\int_{a\sqrt{k_{0}}}^{\infty}e^{-\frac{y^{2}}{2}}dy\]
 \[
=\frac{1}{a}\int_{|y|>a\sqrt{k_{0}}}e^{-\frac{y^{2}}{2}}dy\leq\frac{\sqrt{2\pi}\cdot e^{-\frac{a^{2}k_{0}}{2}}}{a}\]
 Similarly\[
\sum_{k=k_{0}+1}^{m}\frac{1}{\sqrt{k}}\leq\int_{k_{0}}^{m}\frac{dx}{\sqrt{x}}=2\left(\sqrt{m}-\sqrt{k_{0}}\right)\leq2\sqrt{m}\]

\end{proof}
The next set of Lemmas refer to integration over the sphere $\mathbb{S}^{N-1}(r)$. 
\begin{lem}
\label{lem:Integration-on-the-Sphere I}(Integration on the Sphere
I) Let $f\left(v_{1},\dots,v_{N}\right)$ be a continuous function
on $\mathbb{R}^{N}$ then\[
\int_{\mathbb{S}^{N-1}(r)}fds_{r}^{N}=\sum_{\epsilon=\left\{ +,-\right\} }\int_{\sum_{i=1}^{N-1}v_{i}^{2}\leq r^{2}}\frac{r\cdot f\left(v_{1},\dots,v_{N-1},\epsilon\sqrt{r^{2}-\sum_{i=1}^{N-1}v_{i}^{2}}\right)}{\sqrt{r^{2}-\sum_{i=1}^{N-1}v_{i}^{2}}}dv_{1}\dots dv_{N-1}\]
 \end{lem}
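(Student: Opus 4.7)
The plan is to prove the identity by parametrizing the sphere $\mathbb{S}^{N-1}(r)$ as a pair of graphs over the closed ball $B_{N-1}(r)=\{(v_1,\dots,v_{N-1}): \sum_{i=1}^{N-1} v_i^2 \leq r^2\}$, one for each sign of the last coordinate, and then applying the standard surface-area formula for a graph.

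First I would split the sphere into its two closed hemispheres $\mathbb{S}^{N-1}_\pm(r) = \{v \in \mathbb{S}^{N-1}(r): \pm v_N \geq 0\}$, whose intersection is the equator $\{v_N=0\}$, a set of $ds_r^N$-measure zero that can be ignored for integration purposes. On each hemisphere the coordinate $v_N$ is determined by the remaining coordinates via $v_N = \epsilon\sqrt{r^2-\sum_{i=1}^{N-1}v_i^2}$ with $\epsilon \in \{+,-\}$, so $\mathbb{S}^{N-1}_\epsilon(r)$ is the graph of the function $g_\epsilon(v_1,\dots,v_{N-1}) = \epsilon\sqrt{r^2-\sum_{i=1}^{N-1}v_i^2}$ over the open ball, which is smooth in the interior.

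The key computation is the Jacobian factor. For the graph of a smooth function $g$ over a domain in $\mathbb{R}^{N-1}$, the surface measure is $\sqrt{1+|\nabla g|^2}\,dv_1\cdots dv_{N-1}$. Differentiating $g_\epsilon$ gives $\partial_i g_\epsilon = -\epsilon v_i/\sqrt{r^2-\sum_j v_j^2}$, hence
\[
1 + |\nabla g_\epsilon|^2 = 1 + \frac{\sum_{i=1}^{N-1} v_i^2}{r^2-\sum_{i=1}^{N-1}v_i^2} = \frac{r^2}{r^2 - \sum_{i=1}^{N-1} v_i^2},
\]
so $\sqrt{1+|\nabla g_\epsilon|^2} = r/\sqrt{r^2-\sum_{i=1}^{N-1}v_i^2}$, which is exactly the weight that appears in the stated formula.

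Writing $\int_{\mathbb{S}^{N-1}(r)} f\,ds_r^N = \sum_{\epsilon} \int_{\mathbb{S}^{N-1}_\epsilon(r)} f\,ds_r^N$ and substituting the graph parametrization on each hemisphere then yields the identity. The only mild obstacle is the apparent singularity of the integrand at the boundary $\sum v_i^2 = r^2$ (the equator), but since $f$ is continuous and hence bounded on the compact sphere, the factor $1/\sqrt{r^2-\sum v_i^2}$ is integrable over $B_{N-1}(r)$ by the usual polar-coordinate argument, so the integrals are absolutely convergent and the hemisphere decomposition is rigorously justified.
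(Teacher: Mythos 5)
Your proof is correct: the hemisphere decomposition, the Jacobian factor $\sqrt{1+|\nabla g_\epsilon|^2}=r/\sqrt{r^2-\sum_{i=1}^{N-1}v_i^2}$, and the integrability check at the equator together give exactly the stated identity. The paper offers no details here, dismissing the lemma as ``standard in any Differential Geometry course,'' and your argument is precisely that standard graph-parametrization proof, so there is nothing to reconcile.
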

\begin{proof}
Standard in any Differential Geometry course.\end{proof}
\begin{cor}
\label{cor:Integration-on-the-Sphere}(Integration on the Sphere with
the Uniform Probability Measure)\[
\int_{\mathbb{S}^{N-1}(r)}fd\sigma_{r}^{N}=\frac{1}{|\mathbb{S}^{N-1}|r^{N-2}}\cdot\sum_{\epsilon=\left\{ +,-\right\} }\int_{\sum_{i=1}^{N-1}v_{i}^{2}\leq r^{2}}\frac{f\left(v_{1},\dots,v_{N-1},\epsilon\sqrt{r^{2}-\sum_{i=1}^{N-1}v_{i}^{2}}\right)}{\sqrt{r^{2}-\sum_{i=1}^{N-1}v_{i}^{2}}}dv_{1}\dots dv_{N-1}\]
 \end{cor}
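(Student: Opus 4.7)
The plan is to deduce this corollary directly from Lemma \ref{lem:Integration-on-the-Sphere I} by passing from the surface-area measure $ds_r^N$ to the normalized uniform probability measure $d\sigma_r^N$. The key observation is that the sphere $\mathbb{S}^{N-1}(r)$ has total $(N-1)$-dimensional surface area equal to $|\mathbb{S}^{N-1}| \, r^{N-1}$, so by definition
\[
d\sigma_r^N \;=\; \frac{1}{|\mathbb{S}^{N-1}|\, r^{N-1}}\, ds_r^N.
\]

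First, I would integrate the constant function $1$ in Lemma \ref{lem:Integration-on-the-Sphere I} as a sanity check (or invoke the standard formula) to confirm the normalization constant $|\mathbb{S}^{N-1}|\, r^{N-1}$ for the surface area. Then, for a general continuous $f$, I would write
\[
\int_{\mathbb{S}^{N-1}(r)} f\, d\sigma_r^N \;=\; \frac{1}{|\mathbb{S}^{N-1}|\, r^{N-1}} \int_{\mathbb{S}^{N-1}(r)} f\, ds_r^N,
\]
and substitute the explicit graph-parametrization formula from Lemma \ref{lem:Integration-on-the-Sphere I}. The factor of $r$ appearing in the numerator of Lemma \ref{lem:Integration-on-the-Sphere I} combines with $r^{N-1}$ in the denominator to produce $r^{N-2}$, yielding exactly the claimed identity.

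There is essentially no obstacle here; the only thing to be careful about is bookkeeping of the powers of $r$ and making sure the sum over the two hemispheres $\epsilon \in \{+,-\}$ is carried through correctly. Since the formula in Lemma \ref{lem:Integration-on-the-Sphere I} is already stated with this sum, the corollary is an immediate rewriting, so the proof can be given in one or two lines: apply Lemma \ref{lem:Integration-on-the-Sphere I} and divide by the total surface area $|\mathbb{S}^{N-1}|\, r^{N-1}$.
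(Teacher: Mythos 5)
Your proposal is correct and is exactly the argument the paper intends: the corollary is stated without proof precisely because it is the immediate normalization $d\sigma_r^N = ds_r^N/(|\mathbb{S}^{N-1}|r^{N-1})$ applied to Lemma \ref{lem:Integration-on-the-Sphere I}, with the factor $r$ in the numerator cancelling to leave $r^{N-2}$. Your bookkeeping of the powers of $r$ and the sum over $\epsilon\in\{+,-\}$ matches the paper's statement.
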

\begin{lem}
\label{lem:Integration-on-the-Sphere II}(Integration on the Sphere
II) Let $f\left(v_{1},\dots,v_{j}\right)$ and $g\left(v_{j+1},\dots,v_{N}\right)$
be continuous functions on $\mathbb{R}^{j}$ and $\mathbb{R}^{N-j}$
respectfully. Then\[
\int_{\mathbb{S}^{N-1}(r)}f\left(v_{1},\dots,v_{j}\right)\cdot g\left(v_{j+1},\dots,v_{N}\right)d\sigma_{r}^{N}\]
 \[
=\frac{|\mathbb{S}^{N-j-1}|}{|\mathbb{S}^{N-1}|r^{N-2}}\int_{\sum_{i=1}^{j}v_{i}^{2}\leq r^{2}}f\left(v_{1},\dots,v_{j}\right)\left(r^{2}-\sum_{i=1}^{j}v_{i}^{2}\right)^{\frac{N-j-2}{2}}\]
 \[
\left(\int_{\mathbb{S}^{N-j-1}\left(\sqrt{r^{2}-\sum_{i=1}^{j}v_{i}^{2}}\right)}gd\sigma_{\sqrt{r^{2}-\sum_{i=1}^{j}v_{i}^{2}}}^{N-j}\right)dv_{1}\dots dv_{j}\]
 \end{lem}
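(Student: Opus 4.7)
My plan is to prove this disintegration formula by a direct parametrization argument. I first introduce coordinates on $\mathbb{S}^{N-1}(r)$ adapted to the splitting $v=(v',v'')$ with $v'=(v_{1},\dots,v_{j})$ and $v''=(v_{j+1},\dots,v_{N})$: writing $\rho(v')=\sqrt{r^{2}-|v'|^{2}}$, I parametrize a full-measure open subset of $\mathbb{S}^{N-1}(r)$ by the map
\[
\Psi:\{v'\in\mathbb{R}^{j}:|v'|<r\}\times\mathbb{S}^{N-j-1}\longrightarrow\mathbb{S}^{N-1}(r),\qquad\Psi(v',\omega)=(v',\rho(v')\omega),
\]
where $\mathbb{S}^{N-j-1}\subset\mathbb{R}^{N-j}$ is the unit sphere. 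The only points of $\mathbb{S}^{N-1}(r)$ missed by $\Psi$ form the equator $\{|v'|=r\}$, a set of $\mathcal{H}^{N-1}$-measure zero, so the parametrization is sufficient for integrating continuous functions.

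The next step is to compute the induced surface area element under $\Psi$. Picking an orthonormal frame $\{\tau_{\alpha}\}$ of the tangent space to $\mathbb{S}^{N-j-1}$ at $\omega$, the pushed-forward basis $\partial_{v_{i}}\Psi=(e_{i},-v_{i}\omega/\rho)$ and $\partial_{\tau_{\alpha}}\Psi=(0,\rho\tau_{\alpha})$ are orthogonal across blocks because $\omega\cdot\tau_{\alpha}=0$, so the induced Gram matrix is block diagonal. The $v'$-block is $I_{j}+\rho^{-2}v'(v')^{T}$, whose determinant equals $1+|v'|^{2}/\rho^{2}=r^{2}/\rho^{2}$ by the matrix-determinant lemma, and the $\omega$-block is $\rho^{2}I_{N-j-1}$, with determinant $\rho^{2(N-j-1)}$. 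Taking the square root yields
\[
ds_{r}^{N}=r\,(r^{2}-|v'|^{2})^{(N-j-2)/2}\,dv_{1}\cdots dv_{j}\,d\omega,
\]
where $d\omega$ denotes the surface measure on the unit sphere $\mathbb{S}^{N-j-1}$. Substituting $f(v')g(\rho\omega)$ for the integrand and using Fubini on the product domain reduces the integral over $\mathbb{S}^{N-1}(r)$ to an iterated integral over the ball $|v'|<r$ and over the unit sphere in the $v''$-variables.

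Finally, I convert to the normalized measures appearing in the statement. The inner integral is rewritten via $\int_{\mathbb{S}^{N-j-1}}g(\rho\omega)\,d\omega=\rho^{-(N-j-1)}\int_{\mathbb{S}^{N-j-1}(\rho)}g\,ds_{\rho}^{N-j}$; combining this with $ds_{\rho}^{N-j}=|\mathbb{S}^{N-j-1}|\rho^{N-j-1}d\sigma_{\rho}^{N-j}$ and with $d\sigma_{r}^{N}=ds_{r}^{N}/(|\mathbb{S}^{N-1}|r^{N-1})$, the powers of $\rho$ cancel exactly, and one factor of $r$ from the Jacobian pairs with $r^{N-1}$ in the denominator to leave $r^{N-2}$, producing the stated identity. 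The only real pitfall in the argument is bookkeeping: one must be careful that $|\mathbb{S}^{N-1}|$ is the area of the unit sphere (so that $\mathbb{S}^{N-1}(r)$ has area $|\mathbb{S}^{N-1}|r^{N-1}$) and that the powers of $\rho$ coming from the Jacobian of $\Psi$ and from the renormalization of the fiber's probability measure are tracked consistently. Apart from this, there is no genuine analytic obstacle.
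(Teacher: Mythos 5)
Your proof is correct, and it takes a genuinely different (though closely related) route from the paper's. The paper never computes a new Jacobian: it applies its Corollary \ref{cor:Integration-on-the-Sphere} (graph coordinates over the last coordinate, with the sum over the two hemispheres $\epsilon=\pm$), integrates out $v_{j+1},\dots,v_{N-1}$ by Fubini, and then recognizes the resulting inner integral --- complete with the $1/\sqrt{\rho^{2}-\sum_{i>j}v_{i}^{2}}$ weight and the $\pm$ sum --- as exactly the surface integral of $g$ over $\mathbb{S}^{N-j-1}(\rho)$ by reading Lemma \ref{lem:Integration-on-the-Sphere I} backwards, after which the normalization $ds_{\rho}^{N-j}=|\mathbb{S}^{N-j-1}|\rho^{N-j-1}d\sigma_{\rho}^{N-j}$ produces the factor $\left(r^{2}-\sum_{i\leq j}v_{i}^{2}\right)^{\frac{N-j-2}{2}}$. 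You instead build the warped-product parametrization $\Psi(v',\omega)=(v',\rho(v')\omega)$ directly and compute the induced Gram determinant (block-diagonal, with $\det\left(I_{j}+\rho^{-2}v'(v')^{T}\right)=r^{2}/\rho^{2}$ and $\rho^{2(N-j-1)}$ for the fiber block), obtaining $ds_{r}^{N}=r\left(r^{2}-|v'|^{2}\right)^{\frac{N-j-2}{2}}dv'\,d\omega$ in one stroke; your bookkeeping of the powers of $\rho$ and $r$ in the final normalization checks out, and your treatment of the measure-zero set $\{|v'|=r\}$ is adequate for continuous integrands. What each approach buys: the paper's argument recycles its already-stated lemmas and avoids any metric computation, at the price of the slightly awkward hemispherical $\pm$ decomposition and a second, ``reverse'' use of Lemma \ref{lem:Integration-on-the-Sphere I}; yours is self-contained and conceptually cleaner (it exhibits the disintegration of the sphere over the ball in the first $j$ coordinates explicitly), at the price of the explicit Gram/matrix-determinant computation, and it would generalize immediately to integrands that do not split as a product.
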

\begin{proof}
Using Corollary \ref{cor:Integration-on-the-Sphere} we find that\[
\int_{\mathbb{S}^{N-1}(r)}f\left(v_{1},\dots,v_{j}\right)\cdot g\left(v_{j+1},\dots,v_{N}\right)d\sigma_{r}^{N}\]
 \[
\frac{\sum_{\epsilon=\left\{ +,-\right\} }}{|\mathbb{S}^{N-1}|r^{N-2}}\int_{\sum_{i=1}^{N-1}v_{i}^{2}\leq r^{2}}\frac{f\left(v_{1},\dots,v_{j}\right)\cdot g\left(v_{j+1},\dots,v_{N-1},\epsilon\sqrt{r^{2}-\sum_{i=1}^{N-1}v_{i}^{2}}\right)}{\sqrt{r^{2}-\sum_{i=1}^{N-1}v_{i}^{2}}}dv_{1}\dots dv_{N-1}\]
 \[
=\frac{1}{|\mathbb{S}^{N-1}|r^{N-2}}\int_{\sum_{i=1}^{j}v_{i}^{2}\leq r^{2}}\frac{f\left(v_{1},\dots,v_{j}\right)}{\sqrt{r^{2}-\sum_{i=1}^{j}v_{i}^{2}}}\left(\int_{\mathbb{S}^{N-j-1}\left(\sqrt{r^{2}-\sum_{i=1}^{j}v_{i}^{2}}\right)}gds_{\sqrt{r^{2}-\sum_{i=1}^{j}v_{i}^{2}}}^{N-j}\right)dv_{1}\dots dv_{j}\]
 \[
=\frac{|\mathbb{S}^{N-j-1}|}{|\mathbb{S}^{N-1}|r^{N-2}}\int_{\sum_{i=1}^{j}v_{i}^{2}\leq r}f\left(v_{1},\dots,v_{j}\right)\left(r^{2}-\sum_{i=1}^{j}v_{i}^{2}\right)^{\frac{N-j-2}{2}}\]
 \[
\left(\int_{\mathbb{S}^{N-j-1}\left(\sqrt{r^{2}-\sum_{i=1}^{j}v_{i}^{2}}\right)}gd\sigma_{\sqrt{r^{2}-\sum_{i=1}^{j}v_{i}^{2}}}^{N-j}\right)dv_{1}\dots dv_{j}\]
\end{proof}

Author's email: aeinav@math.gatech.edu
\end{document}